\documentclass[journal]{./IEEEtran}

\usepackage{psfrag}
\usepackage{cite}
\usepackage{graphicx}
\usepackage{amsmath}
\usepackage{multirow}
\usepackage{array}
\usepackage{epsfig}
\usepackage{balance}
\usepackage{enumerate}
\usepackage{graphicx}
\usepackage{times}
\usepackage{amssymb}
\usepackage{algorithm}
\usepackage{algorithmic}
\usepackage{color}

\newtheorem{lemma}{Lemma}
\newtheorem{theorem}{Theorem}
\newtheorem{definition}{Definition}
\newtheorem{property}{Property}

\newtheorem{remark}{Remark}

\begin{document}

\title{Fundamental Limits of CDF-Based Scheduling: Throughput, Fairness, and Feedback Overhead}

\author{Hu Jin, {\em Member}, {\em IEEE}, Bang Chul Jung, {\em Member}, {\em IEEE}, and
Victor C. M. Leung, {\em Fellow}, {\em IEEE}
\thanks{Manuscript received October 19, 2012; revised November 24, 2013;
accepted February 22, 2014; approved by IEEE/ACM TRANSACTIONS ON NETWORKING Editor D.
Manjunath. This paper has been published in part in the Proceedings of the IEEE
International Conference on Communications (ICC), Budapest, Hungary, June 9-13, 2013.}
\thanks{H. Jin is with the
Department of Electronics and Communication Engineering, Hanyang University, Ansan 426-791, Korea (E-mail:
hjin@hanyang.ac.kr).}
\thanks{B. C. Jung, corresponding author, is with the Department of Information and Communication Engineering \& Institute of Marine Industry, Gyeongsang National University, Tongyeong 650-160, Korea
(E-mail: bcjung@gnu.ac.kr).}
\thanks{V. C. M. Leung is with the
Department of Electrical and Computer Engineering, The University of British Columbia, Vancouver, BC, V6T 1Z4,
Canada (E-mail: vleung@ece.ubc.ca).} }

\maketitle

\begin{abstract}
In this paper, we investigate fundamental performance limits of cumulative distribution function (CDF)-based
scheduling (CS) in downlink cellular networks. CS is known as an
efficient scheduling method that can assign different time fractions for users or, equivalently, satisfy different
channel access ratio (CAR) requirements of users while exploiting multi-user diversity. We first mathematically
analyze the \textit{throughput characteristics} of CS in arbitrary fading statistics and data rate functions. It is
shown that the throughput gain of CS increases as the CAR of a user decreases or the number of users in a cell
increases. For Nakagami-$m$ fading channels, we obtain the average throughput in closed-form and investigate the
effects of the average signal-to-noise ratio, the shape parameter $m$, and the CAR on the throughput performance.
In addition, we propose a threshold-based \textit{opportunistic feedback} technique in order to reduce feedback
overhead while satisfying the CAR requirements of users. We prove that the average feedback overhead of the
proposed technique is upper bounded by $-\ln p$, where $p$ is the probability that no user satisfies the threshold
condition in a cell. Finally, we adopt a novel fairness criterion, called \textit{qualitative fairness}, which
considers not only the quantity of the allocated resources to users but also the quality of the resources. It is
observed that CS provides a better qualitative fairness than other scheduling algorithms designed for controlling
CARs of users.
\end{abstract}

\begin{keywords}
Cellular networks, user scheduling, CDF-based scheduling, multiuser diversity, fairness,
feedback overhead.
\end{keywords}

\section{Introduction}
In wireless networks, independent fading of users can be exploited for multi-user diversity. In arbitrary fading
channels, the optimal user scheduling method that maximizes the sum throughput both in uplink~\cite{Knopp95} and
downlink~\cite{Tse97} is to select the user who has the largest channel gain at each time slot. Although the above
scheduling method can maximize the sum throughput, it may cause a fairness problem among users located at different
distances from the base station (BS) because the BS tends to select users who are closer to it more frequently due
to their higher average signal-to-noise ratios (SNRs).

Several approaches exist to solve the fairness problem in user scheduling. These approaches have
adopted two different criteria: throughput-based fairness~\cite{Borst01, Chung07, Viswanath02, Kim05, Liu10} and
resource-based fairness~\cite{Liu01, Sharif05, Chen06, Hwang08}. Different systems may adopt different fairness
criteria according to their design objectives. The basic idea of scheduling with throughput-based fairness is
to select the user who maximizes the system throughput, while satisfying a given throughput fairness
criterion. For example, the proportional fairness scheduler (PFS)~\cite{Viswanath02}, originally proposed in the
context of game theory~\cite{Kelly98}, maximizes the product of throughput of users. However, the long term average
throughput of PFS cannot be derived and thus we cannot determine how much resources to allocate to each user with PFS
even in stationary Rayleigh fading channels~\cite{Liu10}. Therefore, PFS does not provide a predictable system
performance. A scheduler designed to achieve throughput-based fairness in a wireless system may allow users with bad channel
conditions to occupy most resources, which degrades the throughput performance of other users.

On the other hand, with resource-based fairness the required amount of resources are assigned to each user and the
throughput obtained from the resources assigned to each user depends on the average SNR, channel statistics,
transmission techniques, etc. Hence, the user who has a higher average SNR or a better transmission technique can
achieve a higher throughput. In this paper, we focus on scheduling algorithms with resource-based fairness. The
round-robin scheduling (RRS) algorithm ~\cite{Shreedhar96} is the simplest scheduling algorithm with resource-based
fairness, which can control the assignment of time fractions for user access, referred as channel access ratios
(CARs) of users in this paper. However, RRS cannot exploit multi-user diversity in wireless communication systems.
Another scheduling method in this category is user selection based on normalized SNR (NSNR) ~\cite{Sharma05}.
Due to its analytical tractability, NSNR has been extensively investigated~\cite{Chen06, Yang06}. However, NSNR
cannot guarantee equal CARs among users when SNR distributions of users are different from each other. Liu {\em
et al.} proposed a scheduling algorithm that maximizes the sum throughput of users given their CAR
requirements~\cite{Liu01}.

Moreover, several scheduling algorithms that assign channel resources to users based on the cumulative distribution
function (CDF) values of channel gains have been proposed in independent studies including the CDF-based scheduling
(CS) algorithm~\cite{Park05}, the distribution fairness scheduling algorithm~\cite{Qin04}, and the score-based
scheduling algorithm \cite{Bonald04}. In CS~\cite{Park05}, the throughput of each user can be obtained
independently, and thus CS is robust to variations of system parameters such as traffic characteristics and number
of users in a cell~\cite{Patil09}. With these useful properties, CS has been studied under various network
scenarios such as multi-user multi-input multi-output~\cite{Kountouris08}, multi-cell coordination~\cite{Bang11},
and cheating of CDF values~\cite{Porat09}. CS was also extended to operate over heterogeneous systems where
real-time and best effort traffic coexists~\cite{Patil07}. The concept of CS was also applied to medium access
control to resolve collisions as well as exploit multi-user diversity in single-hop~\cite{Miao12} and
multi-hop~\cite{Wang06} networks. Although many studies on CS exist, the throughput characteristics of CS in
cellular downlink have not been fully investigated.

For {\em equally weighted users}, all users require the same CAR, and CS selects the user with the largest CDF value
among users in each time slot. {When all users have the same channel statistics and average SNRs, the user
selection policies of CS \cite{Park05}, Liu's scheduling algorithm~\cite{Liu01} and the distribution fairness
scheduling algorithm~\cite{Qin04} are identical.} However, for {\em unequally weighted users} who require
diverse CARs due to different service priorities, quality-of-service (QoS), or pricing policy, etc., the user
selection policies of these scheduling algorithms are quite different from each other \footnote{The score-based
scheduling algorithm proposed in \cite{Bonald04} did not consider unequally weighted users.}. Liu's scheduling
algorithm maximizes the sum throughput of all users in a cell, while the distribution fairness scheduling algorithm
maximizes the sum of the CDF values of the selected users. {However, the literature gives no indication of the
unique property of CS, which distinguishes it from Liu's scheduling algorithm and the distribution fair scheduling
algorithm under unequally weighted users.} Note that these algorithms can satisfy the CAR requirements of users but
show different throughput performance. This phenomenon motivates us to reconsider the fairness aspects of these
algorithms especially {for unequally weighted users,}  because they were originally proposed to address the
fairness issue when exploiting multi-user diversity. Satisfying CAR requirements of users is important in terms of
resource-based fairness but it is not enough to capture all aspects of fairness among users. We need another
fairness criterion to address an additional fairness aspect among users.

A primary goal of many scheduling algorithms is to exploit multi-user diversity in wireless communication systems,
and thus the degree of achieved multi-user diversity for users can be another consideration for fairness. There has
not been any suitable metric in previous studies on resource sharing, which measures the degree of achieved
multi-user diversity. In this paper, therefore, we propose a novel fairness criterion called {\em qualitative
fairness index} (QFI) to measure the degree of achieved multi-user diversity for users under the resource sharing
constraints. QFI is a positive value smaller than 1 and a scheduling algorithm is considered to be well designed
for fairly exploiting multi-user diversity if its QFI approaches the maximum value of 1. {While we show that QFI
gives a measure of the degree of achieved multi-user diversity considering unequally weighted users, we note that
other measures of the degree of achieved multi-user diversity may exist.}

In this paper, we investigate the fairness problem among users in two aspects: {\em quantitative fairness} and {\em
qualitative fairness}. Quantitative fairness stands for satisfaction on the CAR requirements, while qualitative
fairness refers to the quality of the assigned resources to users or satisfaction on the degree of achieved
multi-user diversity. A fair scheduler should satisfy both fairness criteria. It has previously been shown that
RRS, CS, Liu's scheduling algorithm and the distribution fairness scheduling algorithm all satisfy the arbitrary
CAR requirements of users, which means that they can provide quantitative fairness among users. {We further
investigate the qualitative fairness aspects of these algorithms and observe that CS shows a better performance
than other algorithms in terms of qualitative fairness. Hence, we can conclude that superior qualitative fairness
is property that distinguishes CS from other scheduling algorithms.}

In order to exploit multi-user diversity, CS requires all users to feed their CDF values back to the BS in each time
slot as in other scheduling algorithms. For practical systems, the overhead of such feedbacks is a challenging
issue especially when a large number of users exist in a cell. Therefore, it is of great interest to design a
feedback reduction scheme for CS to reduce the number of users sending feedback in each time slot. Several
threshold-based feedback reduction schemes~\cite{Gesbert04, Kim07, Hwang08} have been proposed for various
scheduling methods such as PFS and NSNR. However, none of these schemes supports different CARs among users, as CS
does. Consequently, these feedback reduction schemes cannot be applied to CS.

In this paper, we first analyze the \textit{throughput characteristics} of CS, which has not been fully
investigated in previous studies. For example, for Nakagami-$m$ fading channels, we derive the analytical
expression of the throughput and investigate the effects of the average SNR, the shape parameter $m$, and the
channel access ratio on the throughput gain.  We also propose a novel \textit{feedback reduction} scheme for CS,
which is based on a single threshold even with unequally weighted users. It is shown that the average feedback
overhead of the proposed scheme is smaller than $-\ln p$, where $p$ indicates the probability that no user
satisfies the threshold condition. Finally, we extensively investigate the \textit{fairness} aspect of CS.
Especially, we focus on the qualitative fairness of CS for a given CAR requirement. We show that CS yields a
relatively better qualitative fairness, compared with other scheduling algorithms that can control the CARs.

The rest of this paper is organized as follows:  Section \ref{sec:Proposed_scheme} introduces the system model and
reviews CS.  Section \ref{sec:Analysis} analyzes the throughput performance of CS. Section
\ref{sec:Feedback_reduction} presents the threshold-based feedback reduction scheme and analyzes its performance.
Section \ref{sec:Fairness} introduces the concept of qualitative fairness and discusses the qualitative fairness of
CS. Section \ref{sec:Numerical_results} presents the numerical results. Finally, conclusions are drawn in Section
\ref{sec:Conclusions}.

\section{System Model}\label{sec:Proposed_scheme}
We consider the downlink of a cell with a BS and $n$ users. At each time slot, the BS selects one user to receive
its transmission. The transmit power of the BS is assumed to be constant in each time slot. The BS and the users
are each assumed to have a single antenna. In time slot $t$, the received signal at the $i$-th user is given as
\begin{equation}\label{eq:Channel_model}
\mathbf{y}_i(t) = h_i(t)\mathbf{x}(t) + \mathbf{z}_i(t), ~~~~~~~i=1,2,...,n,
\end{equation}
where $\mathbf{y}(t) \in \mathbb{C}^T$ consists of $T$ received symbols, $\mathbf{x}(t) \in \mathbb{C}^T$ is the
$T$ transmitted symbols, $h_i(t) \in \mathbb{C}$ is the channel gain from the BS to the $i$-th user, and
$\mathbf{z}_i(t)\in \mathbb{C}^T$ is a zero-mean circular-symmetric Gaussian random vector~$(\mathbf{z}_i(t) \sim
\mathcal{CN}(0, \sigma^2 I_T))$. $\mathbb{C}$ denotes set of complex numbers and $I_T$ denotes the identity matrix
of size $T$. The transmit power of the BS is set to $P$, i.e., $E[|\mathbf{x}(t)|^2]/T = P$. We assume a
block-fading channel where the channel gain is constant during $T$ symbols and independently changes between time
slots. Different users may have different channel statistics. The received SNR of the $i$-th user is given by
$\gamma_i(t) = P|h_i(t)|^2/\sigma^2$. Let $F_{i}(\gamma)$ denote the CDF of the SNR of the $i$-th user, which can
be obtained from long-term observations by the user. {In each slot, the BS transmits training signals to facilitate
the users' observations on the CDF.} {If we consider $F_i(\gamma)$ as a general function of $\gamma$ and define the
corresponding output value as $U_i = F_i(\gamma)$, $U_i$ is also a random variable which is transformed from
$\gamma$ through function $F_i(\gamma)$. Then the value of $U_i$ is included in [0, 1] since any CDF value is in
[0, 1]. Moreover, the distribution of $U_i$ can be shown to be uniform in [0, 1] as follows:}
\begin{equation}\label{eq:uniform}
\begin{array}{ll}
F_{U_i}(u)  &= \Pr\{F_i(\gamma) < u\} \\
            &= \Pr\{\gamma < F_i^{-1}{(u)}\} \\
            &= F_i( F_i^{-1}{(u)}) \\
            &= u.
\end{array}
\end{equation}
In this paper, we assume that all users' channels are stationary and the channel statistics of each user are
assumed to be independent from those of other users. {Equation (\ref{eq:uniform}) indicates that all users' CDF
values have the same uniform distribution while their CDFs may not be identical\footnote{In this paper, the {\em
CDF} indicates the function $F_i(\gamma)$ itself while the {\em CDF value} indicates the output value of the CDF
with a specific input of $\gamma$.}. CS exploits this property in fair multi-user scheduling. }

Let $w_i (>0)$ denote the weight of the $i$-th user. The weight indicates the user's CAR compared to other users,
which means that the ratio between the $i$-th and $j$-th users' channel access opportunities is given by $w_i
/w_j$. If there are $n$ users in the system, the $i$-th user's CAR is $\alpha_i = \frac{w_i}{\sum_{j=1}^n w_j}.$
With CS, the feedback information of the $i$-th user is $[F_{i}(\gamma_i(t))]^{\frac{1}{w_i}}$ at time slot $t$ and
the index of the user selected at the BS is given by
\begin{equation}\label{eq:CDF_policy}
\arg \max_{i \in \{1,2,..., n\}}[F_{i}(\gamma_i(t))]^{\frac{1}{w_i}}.
\end{equation}
It has been shown in \cite{Park05} that this scheduling algorithm yields a CAR of $\alpha_i$ for the $i$-th user.
{Note that with CS, the users are in charge of observing the CDFs through long-term observations and calculating
the CDF values for feedback in each time slot while the BS only compares the information sent from the users in
each time slot and does not need to know all users� CDFs.}

\section{Throughput Characteristics of CS}\label{sec:Analysis}

In order to investigate the throughput performance of CS, we first analyze the SNR distribution of the selected
user. We start with the following lemma.
\begin{lemma}\label{lm:CDF}
Let $F(\gamma)$ be the SNR distribution of a user and $\alpha (\in [0, 1])$ be the CAR of the user. With CS, the
SNR distribution of the user given it is selected is expressed as:
\begin{equation}\label{eq:CDF}
F_{\rm Sel}^{\rm CS}(\gamma) = [F(\gamma)]^{\frac{1}{\alpha}}.
\end{equation}
\end{lemma}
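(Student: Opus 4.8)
The plan is to compute the conditional distribution directly from the selection rule in (\ref{eq:CDF_policy}), exploiting the fact established in (\ref{eq:uniform}) that each CDF value $U_j = F_j(\gamma_j)$ is uniform on $[0,1]$. Fix the user of interest, say user $i$, with weight $w_i$, and write $W = \sum_{j=1}^n w_j$ so that $\alpha = \alpha_i = w_i/W$. The scheduling metric of user $j$ is $V_j = U_j^{1/w_j}$, and user $i$ is selected precisely when $V_i \ge V_j$ for all $j \ne i$ (ties occur with probability zero under continuous channel statistics). By Bayes' rule, the desired conditional CDF is the ratio of $\Pr\{\gamma_i \le \gamma,\ \text{user } i \text{ selected}\}$ to $\Pr\{\text{user } i \text{ selected}\} = \alpha$, the last equality being the CAR guarantee already recalled after (\ref{eq:CDF_policy}).

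First I would condition on the value $U_i = u$. Since $V_i \ge V_j$ is equivalent to $U_j \le u^{\,w_j/w_i}$, and the $U_j$ are independent and uniform, the conditional selection probability factorizes as $\prod_{j \ne i} u^{\,w_j/w_i} = u^{(W-w_i)/w_i}$. Because $U_i$ is itself uniform and $\{\gamma_i \le \gamma\}$ is equivalent to $\{U_i \le F(\gamma)\}$ by monotonicity of the CDF, integrating this conditional selection probability over $u \in [0,\,F(\gamma)]$ gives
\[
\Pr\{\gamma_i \le \gamma,\ \text{selected}\} = \int_0^{F(\gamma)} u^{(W-w_i)/w_i}\, du = \frac{w_i}{W}\,[F(\gamma)]^{W/w_i}.
\]
Letting $\gamma \to \infty$ so that $F(\gamma) \to 1$ recovers $\Pr\{\text{selected}\} = w_i/W = \alpha$, which reproduces the CAR and confirms the normalization.

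Dividing the two quantities, the prefactor $w_i/W$ cancels and the exponent becomes $W/w_i = 1/\alpha$, yielding $F_{\rm Sel}^{\rm CS}(\gamma) = [F(\gamma)]^{1/\alpha}$ as claimed. The mild technical points to watch are the continuity and strict monotonicity of $F$ (so that the event equivalences hold and ties have probability zero) and the observation that, although the derivation passes through the individual weights $w_j$, the final expression depends on the other users only through the aggregate $W - w_i$ and hence collapses to a function of $\alpha$ alone. I expect the only real subtlety — rather than a genuine obstacle — to be verifying that the Beta-type integral normalizes to exactly the CAR $\alpha$, which is what makes the $1/\alpha$ exponent come out cleanly.
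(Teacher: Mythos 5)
Your proposal is correct and follows essentially the same route as the paper's proof: both compute $\Pr\{\gamma_i \le \gamma,\ \text{selected}\}$ by exploiting the uniformity and independence of the $U_j$, factorizing the selection event as $\prod_{j\ne i} u^{w_j/w_i}$, integrating (yours over $u\in[0,F(\gamma)]$, the paper's equivalently over $\gamma_i$ against $dF_i(\gamma_i)$), and dividing by the CAR $\alpha_i = w_i/\sum_j w_j$. The change of variables $u = F_i(\gamma_i)$ makes the two computations identical.
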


\begin{proof}
Let $(w_1, w_2, ..., w_n)$ be the weight vector of $n$ users. The $i$-th user feeds back
the value of $[F_{i,t}(\gamma_i(t))]^{\frac{1}{w_i}}$ to the BS. Then, the feedback
information received at the BS in each time slot is given by $(U_1^{\frac{1}{w_1}},
U_2^{\frac{1}{w_2}}, ..., U_n^{\frac{1}{w_n}})$ and the BS selects the user with the
largest value of the feedback information. Then, the probability that the $i$-th user is
selected is given as:
\begin{equation}
\begin{array}{ll}
\Pr\{U_j^{\frac{1}{w_j}} < U_i^{\frac{1}{w_i}} , \forall j \in \{1, 2, \cdots, i-1, i+1, \cdots, n\}\} \\
=\int_0^1 \prod_{j=1, j \neq i}^n \Pr\{U_j^{\frac{1}{w_j}} < u^{\frac{1}{w_i}}\} f_{U_i}(u)du\\
=\int_0^1 \prod_{j=1, j \neq i}^n \Pr\{U_j < u^{\frac{w_j}{w_i}}\} f_{U_i}(u)du\\
=\int_0^1\prod_{j=1, j \neq i}^n u^{\frac{w_j}{w_i}} du\\
=\int_0^1  u^{\sum_{j=1, j \neq i}^n\frac{w_j}{w_i}} du\\
=\displaystyle \frac{w_i}{\sum_{j=1}^n w_j} \\
=\alpha_i.
\end{array}
\end{equation}
Thus, the CAR of the $i$-th user is given by $\alpha_i={w_i}/{\sum_{j=1}^n w_j}$. If all users
have identical weights, the CAR of each user is equal to ${1}/{n}$. The SNR distribution of the
$i$-th user given it is selected is expressed as:
\begin{equation}\label{eq:selected_CDF}
\begin{array}{ll}
F_{i, {\rm Sel}}(\gamma)  &=\Pr\{ \gamma_i \leq \gamma | \text{the $i$-th user is selected}\} \\
                    &= \frac{\Pr\{ \gamma_i \leq \gamma, \text{the $i$-th user is selected}\}}{\Pr\{ \text{the $i$-th user is selected}\}}\\
                    &= \frac{\Pr\{ \gamma_i \leq \gamma, U_j^{\frac{1}{w_j}} < U_i^{\frac{1}{w_i}}, \forall j \in \{1, 2, \cdots, i-1, i+1, \cdots, n
                    \}\}}{\alpha_i}\\
                    &= \frac{\Pr\{ \gamma_i \leq \gamma, U_j < [F_i(\gamma_i)]^{\frac{w_j}{w_i}}, \forall j \in \{1, 2, \cdots, i-1, i+1, \cdots, n  \}\}}{\alpha_i}\\
                    & = \frac{\int_0^\gamma [F_i(\gamma_i)]^{\sum_{j=1,j\neq i}^n\frac{w_j}{w_i}} d
                    F_i(\gamma_i)}{\alpha_i}\\
                    &= \frac{\frac{w_i}{\sum_{j=1}^n w_j}[F_i(\gamma)]^{\frac{\sum_{j=1}^n w_j}{w_i}}}{\alpha_i}\\
                    &= [F_i(\gamma)]^{\frac{1}{\alpha_i}}.
\end{array}
\end{equation}
\end{proof}

To investigate the throughput behavior of CS, we first define the following function:
\begin{definition} \textit{(Universal Throughput Function)} \label{def:universal_thru_fn}
\begin{equation}\label{eq:Universal_throughput_extended}
S(x, \alpha) = \int_{F^{-1}(x)}^\infty R(\gamma)d[F(\gamma)]^{\frac{1}{\alpha}},
\end{equation}
where $x$ and $\alpha$ are values taken in $[0, 1]$, $F(\gamma)$ is the SNR
distribution, which is an increasing function of $\gamma$, $F^{-1}(x)$ is the inverse
function of $F(\gamma)$, and $R(\gamma)$ is the data rate function corresponding to the
instantaneous SNR value.
\end{definition}

We assume $R(\gamma)$ is an increasing function of $\gamma$ since a higher SNR enables a higher data rate in
general. If $S(x, \alpha) < \infty$, we can obtain the following properties and we will use them to find the
interesting throughput behavior of CS later.
\begin{property}\label{pr:alpha_inc}
$\alpha S(x, \alpha)$ is an increasing function of $\alpha$.
\end{property}
\begin{property}\label{pr:alpha_dec}
$S(x^\alpha, \alpha)$ is a decreasing function of $\alpha$.
\end{property}
\begin{property}\label{pr:p_dec}
$S(x, \alpha)$ is a decreasing function of $x$.
\end{property}
\begin{property}\label{pr:p_inc}
$\frac{S(x, \alpha)}{1 - x^\frac{1}{\alpha}}$ is an increasing function of $x$.
\end{property}
\begin{proof}
See Appendix~\ref{apx:Property}.
\end{proof}

For a given CAR of a user~$\alpha$, we define {\em throughput gain} $(g_{\rm CS})$ as the throughput ratio of CS to
RR. Based on \textit{Properties \ref{pr:alpha_inc}} and \textit{\ref{pr:alpha_dec}}, we obtain the following
result.
\begin{theorem}\label{tm:CDF_th}
{With CS, the throughput of a user experiencing arbitrary stationary fading channel increases as its CAR increases,
but the throughput gain decreases as the CAR increases.}
\begin{proof}
Let $\alpha$ and $F(\gamma)$ be the CAR and the SNR distribution of a user, respectively. The throughput of the
user with CS is given as:
\begin{equation}\label{eq:S_CDF}
S_{\rm CS}(\alpha) =\alpha \int_{0}^\infty R(\gamma)d[F(\gamma)]^{\frac{1}{\alpha}} = \alpha S(0, \alpha).
\end{equation}
Applying \textit{Property \ref{pr:alpha_inc}}  with $x=0$, we can observe that $S_{\rm CS}$ is an increasing
function of $\alpha$. In other words, the throughput decreases as the CAR decreases. In practice, the CAR decreases
as the number of users increases in a cell.

Similarly, the throughput of the user with RRS is given as:
\begin{equation}
S_{\rm RRS}(\alpha) = \alpha \int_{0}^\infty R(\gamma)dF(\gamma) = \alpha S(0, 1) = \alpha E[R],
\end{equation}
where $E[R]$ indicates the average data rate for the user, defined as $S(0, 1)$ from
(\ref{def:universal_thru_fn}). The throughput gain is expressed as:
\begin{equation} \label{eq:CDF_th_gain}
g_{\rm CS}(\alpha) = \displaystyle \frac{S_{\rm CS}(\alpha)}{S_{\rm RRS}(\alpha)}= \displaystyle  \frac{S(0,
\alpha)}{S(0, 1)} =\displaystyle  \frac{S(0, \alpha)}{E[R]}\geq 1.
\end{equation}
From \textit{Property \ref{pr:alpha_dec}}  with $x=0$, the throughput gain increases as the
CAR decreases.
\end{proof}
\end {theorem}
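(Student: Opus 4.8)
The plan is to reduce both assertions to the monotonicity \emph{Properties~\ref{pr:alpha_inc}} and \emph{\ref{pr:alpha_dec}} of the universal throughput function by the single judicious substitution $x=0$. First I would express the per-user throughput under CS in terms of $S$. By \emph{Lemma~\ref{lm:CDF}}, the SNR distribution of the user conditioned on being selected is $[F(\gamma)]^{1/\alpha}$, and the user is selected a fraction $\alpha$ of the time. Hence the long-term throughput contributed by the user is the selection probability $\alpha$ times the conditional expected rate,
\[
S_{\rm CS}(\alpha) = \alpha \int_0^\infty R(\gamma)\, d[F(\gamma)]^{1/\alpha} = \alpha\, S(0,\alpha),
\]
where the last equality is just \emph{Definition~\ref{def:universal_thru_fn}} evaluated at $x=0$, since $F^{-1}(0)$ is the lower endpoint of the SNR support.

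For the first assertion (throughput increases with CAR), I would simply invoke \emph{Property~\ref{pr:alpha_inc}} with $x=0$: the quantity $\alpha S(0,\alpha) = S_{\rm CS}(\alpha)$ is an increasing function of $\alpha$, so the throughput grows as the CAR increases.

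For the second assertion (the gain decreases with CAR), I would first record that round-robin scheduling, which selects each user with probability $\alpha$ regardless of channel state, yields $S_{\rm RRS}(\alpha) = \alpha\int_0^\infty R(\gamma)\,dF(\gamma) = \alpha S(0,1) = \alpha E[R]$. Forming the gain as the ratio of the two throughputs, the common factor $\alpha$ cancels, leaving
\[
g_{\rm CS}(\alpha) = \frac{S_{\rm CS}(\alpha)}{S_{\rm RRS}(\alpha)} = \frac{S(0,\alpha)}{S(0,1)}.
\]
Because the denominator $S(0,1)=E[R]$ is independent of $\alpha$, the monotonicity of $g_{\rm CS}$ is governed entirely by the numerator. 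I would then apply \emph{Property~\ref{pr:alpha_dec}} with $x=0$: since $0^\alpha=0$, the expression $S(0^\alpha,\alpha)=S(0,\alpha)$ is a decreasing function of $\alpha$, so the gain decreases as the CAR increases; evaluating at $\alpha=1$ gives $g_{\rm CS}(1)=1$, which together with this monotonicity yields $g_{\rm CS}(\alpha)\geq 1$ for all $\alpha\in[0,1]$.

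The genuine content of the argument is front-loaded into the two properties (established in the appendix), so once they are granted the theorem is essentially immediate; the only real work here is the bookkeeping that matches each assertion to the correct property through $x=0$. The points I would be most careful about are verifying that $x=0$ indeed corresponds to integrating over the full SNR support, so that no probability mass is lost at the boundary, and confirming that the cancellation of $\alpha$ in $g_{\rm CS}$ is legitimate, i.e.\ that both throughputs are evaluated under the same selection probability so that the ratio isolates the purely opportunistic gain $S(0,\alpha)/E[R]$.
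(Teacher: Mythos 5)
Your proposal is correct and follows essentially the same route as the paper: express $S_{\rm CS}(\alpha)=\alpha S(0,\alpha)$ and $S_{\rm RRS}(\alpha)=\alpha S(0,1)$, then invoke \emph{Property~\ref{pr:alpha_inc}} and \emph{Property~\ref{pr:alpha_dec}} at $x=0$ for the two monotonicity claims. Your added remark that $0^{\alpha}=0$ justifies the use of \emph{Property~\ref{pr:alpha_dec}}, and that $g_{\rm CS}(1)=1$ plus monotonicity gives $g_{\rm CS}(\alpha)\geq 1$, is a slightly more explicit version of what the paper states.
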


Furthermore, if we apply \textit{Property \ref{pr:alpha_dec}} with $x=0$, then we can also observe that $S_{\rm
CS}$ is larger than $S_{\rm RRS}$, which means that CS always provides a higher throughput than RRS; i.e., the
throughput gain is always larger than $1$. Although CS provides a better throughput performance than RRS, it does
not guarantee the optimal throughput for a given CAR. For example, we consider the cellular downlink where there
exist two users in the cell and the channel of the first user is Rayleigh distributed, while the channel of the
second user is constant. Both users are assumed to have the same CAR, which is equal to $1/2$. Since the achievable
data rate of the second user is constant at any time, the throughput of the first user is maximized when the BS
selects it if $F_1(\gamma_1) > 1/2$ where $\gamma_1$ denotes the SNR of the first user. For a given CAR, the
optimal throughput of a user is obtained by the following lemma\footnote{A similar upper-bound was also given in
\cite{Park05}, but a rigorous proof was not provided.}:
\begin{lemma}\label{tm:UB}
Let $F(\gamma)$ and $R(\gamma)$ be the SNR distribution and the data rate function of a user, respectively. For a
given CAR requirement $\alpha$, the throughput of the user is upper bounded by
\begin{equation}\label{eq:S_UB}
S_{\rm UB}(\alpha) = S(1 - \alpha, 1).
\end{equation}
\end{lemma}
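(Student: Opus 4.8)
The plan is to interpret $S_{\rm UB}(\alpha)=S(1-\alpha,1)=\int_{F^{-1}(1-\alpha)}^{\infty}R(\gamma)\,dF(\gamma)$ as the throughput of the idealized \emph{genie-aided} policy that serves the user exactly in those slots where its instantaneous SNR falls in the top $\alpha$-fraction of its own distribution, and then to show that \emph{no} policy meeting the CAR constraint can do better.

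First I would reduce the (in general coupled, history-dependent) multi-user scheduler to a single scalar object. Viewed from the tagged user, any stationary policy is summarized by the conditional service probability $p(\gamma)=\Pr\{\text{served}\mid\gamma_i=\gamma\}$, obtained by averaging the scheduling decision over the other users' states and the history. By stationarity and the tower property, the CAR is $\int_0^{\infty}p(\gamma)\,dF(\gamma)=\alpha$ with $0\le p(\gamma)\le 1$, and the delivered throughput is the linear functional $T[p]=\int_0^{\infty}R(\gamma)\,p(\gamma)\,dF(\gamma)$. As a sanity check, $p\equiv\alpha$ (RRS) returns $\alpha E[R]$ and the CS service probability returns $\alpha S(0,\alpha)$, matching the earlier expressions. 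The lemma is then exactly the assertion $T[p]\le S(1-\alpha,1)$ for every feasible $p$.

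The candidate maximizer is the threshold rule $p^{\star}(\gamma)=\mathbf{1}\{\gamma>\gamma^{\star}\}$ with $\gamma^{\star}=F^{-1}(1-\alpha)$; it is feasible since $\int p^{\star}\,dF=1-F(\gamma^{\star})=\alpha$, and it attains $T[p^{\star}]=\int_{\gamma^{\star}}^{\infty}R(\gamma)\,dF(\gamma)=S(1-\alpha,1)$. Optimality of $p^{\star}$ follows from a rearrangement (exchange) argument. For any feasible $p$, write $T[p^{\star}]-T[p]=\int_0^{\infty}R(\gamma)\,\bigl(p^{\star}(\gamma)-p(\gamma)\bigr)\,dF(\gamma)$. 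On $\{\gamma>\gamma^{\star}\}$ one has $p^{\star}-p=1-p\ge0$ and, since $R$ is increasing, $R(\gamma)\ge R(\gamma^{\star})$; on $\{\gamma<\gamma^{\star}\}$ one has $p^{\star}-p=-p\le0$ and $R(\gamma)\le R(\gamma^{\star})$. Hence $\bigl(R(\gamma)-R(\gamma^{\star})\bigr)\bigl(p^{\star}-p\bigr)\ge0$ pointwise, and integrating while using $\int(p^{\star}-p)\,dF=\alpha-\alpha=0$ gives $T[p^{\star}]-T[p]\ge R(\gamma^{\star})\int(p^{\star}-p)\,dF=0$, which is the claim.

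I expect the exchange inequality itself to be routine; the delicate part is the first step, namely justifying that an arbitrary randomized, history-dependent scheduler is faithfully represented by the single function $p(\gamma)$ with the two displayed integral identities, and tidying up the boundary (the meaning of $F^{-1}$ when $F$ has flat pieces or jumps, and the measure-zero set $\{\gamma=\gamma^{\star}\}$). A cleaner alternative that sidesteps the operational reduction is to pose the problem abstractly as maximizing the linear functional $T[p]$ over $\{0\le p\le1,\ \int p\,dF=\alpha\}$ and invoke the bang-bang structure of this infinite-dimensional linear program, whose optimal extreme point is precisely the threshold rule $p^{\star}$.
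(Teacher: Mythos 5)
Your proposal is correct and follows essentially the same route as the paper's Appendix B: both reduce the scheduler to a selection probability $g(\gamma)\in[0,1]$ with $\int g\,dF=\alpha$, and both use the monotonicity of $R$ to show the threshold rule at $F^{-1}(1-\alpha)$ is optimal — your pointwise exchange inequality $(R(\gamma)-R(\gamma^{\star}))(p^{\star}-p)\ge 0$ is just a compact rewriting of the paper's two-step split-and-bound chain. The paper, like you, takes the reduction to $g(\gamma)$ as the starting point without further operational justification, so your flagged "delicate part" is not something the paper addresses either.
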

\begin{proof}
{See Appendix~\ref{apx:lm_UB}.}
\end{proof}

The above lemma implies that, for a given CAR of $\alpha$, the optimal scheduling algorithm in terms of throughput
is to select the user with SNR such that $F(\gamma) \geq 1-\alpha$. Based on \textit{Properties \ref{pr:p_dec}} and
\textit{\ref{pr:p_inc}} with $x=1-\alpha$, we observe that the throughput upper-bound decreases as the CAR
decreases. We define another throughput gain as:
\begin{equation} \label{eq:UB_th_gain}
g_{\rm UB}(\alpha)=\frac{S_{\rm UB}(\alpha)}{S_{\rm RRS}(\alpha)}.
\end{equation}
The throughput gain of the upper-bound in (\ref{eq:UB_th_gain}) increases as the CAR decreases. The following
theorem states the throughput relationship between CS and the optimal scheduling algorithm:
\begin{theorem}\label{tm:UB_achivability}
{If the supported data rate of a user has a maximum value, the throughput of CS approaches the throughput upper-bound
as the CAR decreases to zero.} In other words, for a given condition that $R(\gamma) = R_{\rm th}$ for $\gamma >
\gamma_{0}$, we obtain
\begin{equation}
\lim_{\alpha \rightarrow 0}\frac{S_{\rm CS}(\alpha)}{S_{\rm UB}(\alpha)} =\lim_{\alpha \rightarrow 0} \frac{\alpha
S(0, \alpha)}{S(1 -\alpha, 1)} = 1.
\end{equation}
\end{theorem}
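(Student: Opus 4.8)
The plan is to evaluate both the numerator $S_{\rm CS}(\alpha)=\alpha S(0,\alpha)$ and the denominator $S_{\rm UB}(\alpha)=S(1-\alpha,1)$ explicitly in the regime of small $\alpha$, exploiting the saturation hypothesis $R(\gamma)=R_{\rm th}$ for $\gamma>\gamma_0$. First I would handle the upper bound. Choosing $\alpha$ small enough that $F^{-1}(1-\alpha)>\gamma_0$, the integrand in $S(1-\alpha,1)=\int_{F^{-1}(1-\alpha)}^{\infty}R(\gamma)\,dF(\gamma)$ equals the constant $R_{\rm th}$ over the whole range of integration, so $S_{\rm UB}(\alpha)=R_{\rm th}\,[\,1-(1-\alpha)\,]=R_{\rm th}\,\alpha$. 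Thus the denominator is known exactly (not merely asymptotically) once $\alpha$ is sufficiently small.

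Next I would rewrite the numerator through the change of variables $u=F(\gamma)$. Since $d[F(\gamma)]^{1/\alpha}=\tfrac{1}{\alpha}u^{1/\alpha-1}\,du$, we obtain $S_{\rm CS}(\alpha)=\alpha S(0,\alpha)=\int_0^1 R\bigl(F^{-1}(u)\bigr)\,u^{1/\alpha-1}\,du$. I would then split this integral at $u_0:=F(\gamma_0)$. On $[u_0,1]$ the saturation condition gives $R(F^{-1}(u))=R_{\rm th}$, so this piece equals $\int_{u_0}^{1}R_{\rm th}\,u^{1/\alpha-1}\,du=R_{\rm th}\,\alpha\,(1-u_0^{1/\alpha})$. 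On $[0,u_0]$, because $R$ is nonnegative, increasing, and capped at $R_{\rm th}$, we have $0\le R(F^{-1}(u))\le R_{\rm th}$, so this piece is nonnegative and bounded above by $R_{\rm th}\int_0^{u_0}u^{1/\alpha-1}\,du=R_{\rm th}\,\alpha\,u_0^{1/\alpha}$.

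Combining these, I would form the ratio
\begin{equation}
\frac{S_{\rm CS}(\alpha)}{S_{\rm UB}(\alpha)}
=(1-u_0^{1/\alpha})+\frac{1}{R_{\rm th}\,\alpha}\int_0^{u_0}R\bigl(F^{-1}(u)\bigr)u^{1/\alpha-1}\,du,
\end{equation}
where the last term lies in $[0,\,u_0^{1/\alpha}]$ by the bound just obtained. For a fading channel with unbounded SNR support we have $u_0=F(\gamma_0)<1$, hence $u_0^{1/\alpha}\to 0$ as $\alpha\to 0$. Therefore the first term tends to $1$ and the residual term is squeezed to $0$, giving $\lim_{\alpha\to0}S_{\rm CS}(\alpha)/S_{\rm UB}(\alpha)=1$, as claimed.

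I expect the main obstacle to be controlling the low-SNR part of $S_{\rm CS}$, i.e. the integral over $[0,u_0]$, relative to the vanishing scale $R_{\rm th}\,\alpha$ of the denominator. The quantitative fact that resolves it is that the CDF-power weighting $u^{1/\alpha-1}$ concentrates all of its mass arbitrarily close to $u=1$ as $\alpha\to0$, made precise by $u_0^{1/\alpha}\to0$; this step relies on $F(\gamma_0)<1$, a condition automatically satisfied whenever the channel has unbounded support such as Nakagami-$m$ fading.
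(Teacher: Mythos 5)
Your proof is correct and follows essentially the same route as the paper: both evaluate $S_{\rm UB}(\alpha)=S(1-\alpha,1)=R_{\rm th}\,\alpha$ once $F^{-1}(1-\alpha)>\gamma_0$, isolate the saturated part of the numerator to get $R_{\rm th}\,\alpha\,(1-[F(\gamma_0)]^{1/\alpha})$, and conclude from $[F(\gamma_0)]^{1/\alpha}\to 0$. The only cosmetic difference is that you obtain the upper bound on the ratio by directly squeezing the low-SNR remainder, whereas the paper invokes Lemma~\ref{tm:UB} for that direction.
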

\begin{proof}
See Appendix~\ref{apx:lm_UB_achivability}.
\end{proof}

In practice, the supported number of levels of the modulation and coding scheme (MCS) is finite and the maximum
data rate is limited. Hence, CS can achieve the throughput upper-bound as the CAR tends to zero. If the data rate
function has no upper limit, then the throughput of a user with  CS yields the following behavior:
\begin{theorem}\label{tm:low_up}
For a given CAR requirement $\alpha$ and no upper limit for the data rate function, the throughput of CS is upper-bounded by $S_{\rm UB}(\alpha)$ and is lower-bounded by:
\begin{equation}\label{eq:low_up}
S_{\rm CS}(\alpha) \geq \alpha R(F^{-1}(1 + \alpha\ln\alpha))[1 - (1 + \alpha\ln\alpha)^{\frac{1}{\alpha}}].
\end{equation}
Furthermore, if $\lim_{\alpha \rightarrow 0} S_{\rm UB}(\alpha) = 0$, the throughput gain of CS has the following
characteristics:
\begin{align}
\displaystyle \lim_{\alpha \rightarrow 0} g_{\rm CS}(\alpha)& \geq g_{\rm LB} = \lim_{\alpha \rightarrow 0}\frac{ R(F^{-1}(1+\alpha\ln(\alpha)))}{E[R]}, \\
\displaystyle \lim_{\alpha \rightarrow 0} g_{\rm CS}(\alpha)& \leq g_{\rm UB} = \lim_{\alpha \rightarrow 0}\frac{
R(F^{-1}(1-\alpha))}{E[R]}.
\end{align}
If $\lim_{\gamma \rightarrow \infty} R(\gamma) = \infty$, then the throughput gain $g_{\rm CS}(\alpha)$ tends to
infinity as $\alpha$ decreases to zero.
\end{theorem}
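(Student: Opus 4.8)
The statement bundles four claims, which I would dispatch in increasing order of difficulty. The upper bound $S_{\rm CS}(\alpha)\le S_{\rm UB}(\alpha)$ carries no new content: CS is a legitimate scheduling policy delivering CAR $\alpha$, so \textit{Lemma~\ref{tm:UB}} applies verbatim. The substance lies in the explicit lower bound (\ref{eq:low_up}) on $S_{\rm CS}(\alpha)$ and in the two limiting statements for $g_{\rm CS}(\alpha)$.

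For the lower bound, the plan is to discard the low-SNR part of the throughput integral and bound the surviving high-SNR part by the smallest rate it contains. Fix the cutoff CDF level $x_0=1+\alpha\ln\alpha$; since $\alpha\ln\alpha$ attains minimum $-e^{-1}$ on $(0,1)$, we have $x_0\in[1-e^{-1},1)\subset(0,1)$, so $F^{-1}(x_0)$ is well defined. Starting from $S_{\rm CS}(\alpha)=\alpha\int_0^\infty R(\gamma)\,d[F(\gamma)]^{1/\alpha}$ in (\ref{eq:S_CDF}) and dropping the nonnegative contribution of $\gamma<F^{-1}(x_0)$ gives $S_{\rm CS}(\alpha)\ge\alpha\int_{F^{-1}(x_0)}^\infty R(\gamma)\,d[F(\gamma)]^{1/\alpha}$. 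As $R$ is increasing, $R(\gamma)\ge R(F^{-1}(x_0))$ on this range, while the residual mass is $\int_{F^{-1}(x_0)}^\infty d[F(\gamma)]^{1/\alpha}=1-x_0^{1/\alpha}$; together these produce exactly (\ref{eq:low_up}).

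The two gain limits then follow after dividing by $S_{\rm RRS}(\alpha)=\alpha E[R]$ and sending $\alpha\to0$. For the lower limit I divide (\ref{eq:low_up}) by $\alpha E[R]$ and must evaluate $\lim_{\alpha\to0}(1+\alpha\ln\alpha)^{1/\alpha}$; taking logarithms, $\tfrac1\alpha\ln(1+\alpha\ln\alpha)=\ln\alpha+O(\alpha(\ln\alpha)^2)\to-\infty$, so this term vanishes, the bracket tends to $1$, and $\lim g_{\rm CS}\ge g_{\rm LB}$. For the upper limit I use $g_{\rm CS}(\alpha)\le S_{\rm UB}(\alpha)/(\alpha E[R])$ and evaluate $\lim_{\alpha\to0}S_{\rm UB}(\alpha)/\alpha$: substituting $u=F(\gamma)$ writes $S_{\rm UB}(\alpha)=\int_{1-\alpha}^1 R(F^{-1}(u))\,du$, whose $\alpha$-derivative is $R(F^{-1}(1-\alpha))$, so under the hypothesis $\lim_{\alpha\to0}S_{\rm UB}(\alpha)=0$ this is a $0/0$ form and L'H\^opital's rule gives $\lim S_{\rm UB}(\alpha)/\alpha=\lim R(F^{-1}(1-\alpha))$, i.e. $\lim g_{\rm CS}\le g_{\rm UB}$. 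When in addition $\lim_{\gamma\to\infty}R(\gamma)=\infty$, the argument $1+\alpha\ln\alpha\to1^-$ drives $F^{-1}(1+\alpha\ln\alpha)\to\infty$ on an unbounded SNR support, so $g_{\rm LB}=\infty$ and the lower bound alone forces $g_{\rm CS}(\alpha)\to\infty$.

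I expect the asymptotic $(1+\alpha\ln\alpha)^{1/\alpha}\to0$ to be the crux: the cutoff $x_0=1+\alpha\ln\alpha$ is engineered precisely so that this factor vanishes (retaining the entire high-SNR mass) while simultaneously $x_0\to1$ (so that $R(F^{-1}(x_0))$ probes the top of the rate curve), and securing both effects at once is what makes the lower bound tight enough to meet the upper limit. A secondary care point is the L'H\^opital step, where I must confirm that the hypothesis $\lim S_{\rm UB}=0$ genuinely supplies the $0/0$ indeterminacy and that differentiation under the integral is legitimate.
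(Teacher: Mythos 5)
Your proposal is correct and follows essentially the same route as the paper's proof: the same truncation of the throughput integral at the cutoff $F^{-1}(1+\alpha\ln\alpha)$ combined with the monotonicity of $R$, the same division by $S_{\rm RRS}(\alpha)=\alpha E[R]$ together with $\lim_{\alpha\to 0}(1+\alpha\ln\alpha)^{1/\alpha}=0$ for the gain limits, and the same use of \textit{Lemma~\ref{tm:UB}} for the upper bound. Your added justifications (the range of $1+\alpha\ln\alpha$, the logarithmic expansion, and the L'H\^opital step for $S_{\rm UB}(\alpha)/\alpha$) only make explicit what the paper leaves implicit.
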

\begin{proof}
See Appendix~\ref{apx:lm_low_up}.
\end{proof}

According to \textit{Theorem~\ref{tm:low_up}}, an SNR distribution with large values of $F^{-1}(1+\alpha\ln\alpha)$
and $F^{-1}(1-\alpha)$ yields a high throughput gain with CS when the CAR is small enough. Fig.~\ref{fig:CDF} shows
the CDF of the received SNR of a user in Nakagami-$m$ fading channel when the average SNR is set to $0$ dB. When
$\alpha$ is small enough, for the example where $\alpha = 0.1$ in the figure, both
$F^{-1}(1+\alpha\ln\alpha)=F^{-1}(0.77)$ and $F^{-1}(1-\alpha)=F^{-1}(0.9)$ become smaller as the shape parameter
$m$ increases, which means that the throughput gain of CS decreases as the shape parameter increases. On the other
hand, the outage probability, which is the performance metric of interest in many wireless systems, decreases as
the shape parameter increases. If we set the SNR threshold for the outage to 0.5 in the figure, the outage
probability decreases from 0.39 to 0.02 as the shape parameter $m$ increases from 1 to 10. Thus, there is a
tradeoff between low outage probability and high throughput gain.

\begin{figure}
\centering
\includegraphics[width=3.5in]{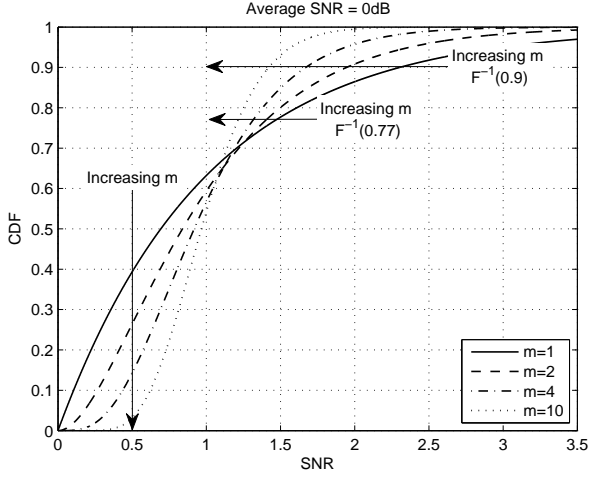}
    \caption{CDF curves for Nakagami-$m$ fading channels.}
    \label{fig:CDF}
\end{figure}

For a representative example, in the rest of this section, we analyze the throughput of a user with CS in
Nakagami-$m$ fading channels. We assume the data rate function as $R(\gamma) = \log_2(1+\gamma)$, which is the
Shannon capacity. In Nakagami-$m$ fading channels, the SNR distribution of a user follows the Gamma distribution
whose probability density function (pdf) is given as:
\begin{equation}
f_{m, \overline \gamma}(\gamma) = \frac{1}{\Gamma(m)}\left(\frac{m}{\overline
\gamma}\right)^{m} \gamma^{m-1}e^{-\frac{m}{\overline \gamma}\gamma},
\end{equation}
where $m$ denotes the shape parameter, $\overline \gamma$ denotes the average SNR,  and $\Gamma(m)$ indicates the
Gamma function defined as $\int_0^\infty e^{-y}y^{m-1}dy$. If $m$ is a positive integer, the corresponding CDF is
expressed as:
\begin{equation}
F_{m, \overline \gamma}(\gamma) = 1 -\sum_{j=0}^{m-1}\frac{1}{j!}\left(\frac{m}{\overline
\gamma}\right)^{j} \gamma^je^{-\frac{m}{\overline \gamma}{\gamma}}.
\end{equation}
If $K=\frac{1}{\alpha}$ and $K$ is an integer, by extending the analysis in
\cite{Chen06}, the universal throughput function, $S(x, \alpha)$, can be expressed as:
\begin{equation}\label{eq:S_nakagami}
\begin{array}{ll}
S(x, \frac{1}{K}) &= \log_2(1+\gamma_{\rm th}) \{1 - [F_{m, \overline \gamma}(\gamma_{\rm th})]^K \} \\
                  &~~~+ \log_2(e)\sum_{k=1}^{K}   \sum_{j=0}^{k(m-1)} (-1)^{k+1} \cdot \\
                  &~~~{K \choose k} \underline{c(j, k)} \left(\frac{m}{\overline \gamma}\right)^j \underline
{T(\gamma_{\rm th}, j, \frac{\overline \gamma}{km})},
\end{array}
\end{equation}
where the term $\gamma_{\rm th}$ is the SNR satisfying $F_{m, \overline \gamma}(\gamma_{\rm th})=x$, the term
$c(j, k)$ is defined as:
\begin{equation}
\begin{array}{l}
c(0, k) = 1, ~~~~~~\\
c(1, k) = k,\\
c(k(m-1), k) = [(m-1)!]^{-k},\\
c(j, k) = \frac{1}{j}\sum\limits_{l=1}^{\min(j, m-1)}\frac{l(k+1)-j}{l!}c({j-l}, k),\\
~~~~~~~~~~~~~~~~~~~~~~~~~~~{\rm for~}2\leq j \leq k(m-1),
\end{array}
\end{equation}
and the term $T(\gamma_{\rm th}, j, \theta)$ is defined as:
\begin{equation}
\begin{array}{l}
T(\gamma_{\rm th}, j, \theta) = e^{\frac{1}{\theta}} \left\{ (-1)^{j} E_1\left({\frac{1 + \gamma_{\rm th}}{\theta}}\right) \right. \\
\left.\!\!+ \!\!\sum\limits_{i=1}^j {j \choose i} (-1)^{j-i} \theta^i (i-1)!\left[\sum\limits_{l=0}^{i-1}\frac{1}{l!}\left(\frac{1 + \gamma_{\rm th}}{\theta} \right)^le^{-\frac{1 + \gamma_{\rm th}}{\theta} }\right] \right\},\\
\end{array}
\end{equation}
where the exponential integral function of the first kind is defined as $ E_1(y) = \int_y^\infty
\frac{e^{-t}}{t}dt$. Thus, based on (\ref{eq:S_nakagami}), the values of $S_{\rm RRS}(\alpha)$, $S_{\rm
CS(\alpha)}$, and $S_{\rm UB}(\alpha)$ can also be obtained.

It is well known that a larger number of users result in a higher multi-user diversity.
Conventionally, this phenomenon has been observed by investigating the increasing scale of
the sum throughput when the number of users increases to infinity in cellular
systems~\cite{Viswanath02, Jung11, Jung12} and cognitive networks~\cite{Ban09, Tajer10}. The increasing scale
represents how fast the throughput increases as the number of users in a network increases. Since we
consider resource-based fairness in this paper, we alternatively investigate the
increasing scale of the throughput gain of each user when the number of users
increases to infinity or, equivalently, the CAR decreases to zero.
The inverse function of $F_{m, \overline \gamma}(\gamma)$ is given as \cite{Sharif05}:
\begin{equation}
\begin{array}{l}
F_{m, \overline \gamma}^{-1}(1-\alpha) =\\~~~~ \frac{\overline \gamma}{m}
\left[\ln(\frac{1}{\alpha}) +
(m-1)\ln\ln(\frac{1}{\alpha})+O(\ln\ln\ln(\frac{1}{\alpha}))\right].
\end{array}
\end{equation}
When $\alpha$ approaches to $0$, the upper- and lower-bound of the throughput gain with CS is given as
\begin{equation}\label{eq:UB_scale_nakagami}
\begin{array}{ll}
g_{\rm UB}= \lim\limits_{\alpha \rightarrow 0}\frac{ R(F_{m, \overline
\gamma}^{-1}(1-\alpha))}{E[R]}
\\=\log_2e\lim\limits_{\alpha \rightarrow 0}\frac{ \ln\left(1+\frac{\overline \gamma}{m}
\left[\ln(\frac{1}{\alpha}) +
(m-1)\ln\ln(\frac{1}{\alpha})+O(\ln\ln\ln(\frac{1}{\alpha}))\right]\right)}{E[R]}\\
\\=\log_2e\lim\limits_{\alpha \rightarrow 0}\frac{ \ln(\frac{\overline \gamma}{m}) +
\ln\ln(\frac{1}{\alpha})+O(\ln\ln\ln(\frac{1}{\alpha}))}{E[R]},\\
\end{array}
\end{equation}
\begin{equation}\label{eq:LB_scale_nakagami}
\begin{array}{ll}
g_{\rm LB}= \lim\limits_{\alpha \rightarrow 0}\frac{ R(F_{m, \overline
\gamma}^{-1}(1+\alpha\ln\alpha))}{E[R]}
\\=\log_2e\cdot  \\~~ \lim\limits_{\alpha \rightarrow 0}\frac{ \ln\left(1+\frac{\overline \gamma}{m}
\left[\ln(\frac{-1}{\alpha\ln\alpha}) +
(m-1)\ln\ln(\frac{-1}{\alpha\ln\alpha})+O(\ln\ln\ln(\frac{-1}{\alpha\ln\alpha}))\right]\right)}{E[R]}\\
= \log_2e\lim\limits_{\alpha \rightarrow 0}\frac{1}{E[R]} \ln\left\{1+\frac{\overline \gamma}{m} \cdot \left[(\ln(\frac{1}{\alpha}) - \ln\ln(\frac{1}{\alpha})) \right.\right.\\
 ~~~~~~~~~~~~~~~~~~+ (m-1)\ln[\ln(\frac{1}{\alpha}) - \ln\ln(\frac{1}{\alpha})]\\
 ~~~~~~~~~~~~~~~~~\left. \left.+O(\ln\ln[\ln(\frac{1}{\alpha}) - \ln\ln(\frac{1}{\alpha})])\right]\right\}\\
=\log_2e\lim\limits_{\alpha \rightarrow 0}\frac{ \ln(\frac{\overline \gamma}{m}) +
\ln\ln(\frac{1}{\alpha})+O(\ln\ln\ln(\frac{1}{\alpha}))}{E[R]},
\end{array}
\end{equation}
respectively. Since $\ln(\frac{\overline \gamma}{m}) \ll \ln\ln(\frac{1}{\alpha})$ as $\alpha \rightarrow 0$, both
$g_{\rm UB}$ and $g_{\rm LB}$ increase in a scale of $\frac{\log_2e\cdot\ln\ln(\frac{1}{\alpha})}{E[R]}$ in
Nakagami-$m$ fading channels. Therefore, extending \textit{Theorem~\ref{tm:low_up}}, we have the following remark:
\begin{remark}\label{lm:scale_nakagami}
In Nakagami-$m$ fading channels, the throughput gain of CS increases with the optimal scale of
$\frac{\log_2e\cdot\ln\ln(\frac{1}{\alpha})}{{E[R]}}$ as $\alpha$ decreases to zero. With equally weighted users,
the increasing scale is given by $\frac{\log_2e\cdot\ln\ln(n)}{E[R]}$ as $n$ increases to infinity, where $n$
indicates the number of users in a cell.
\end{remark}

If the term $\ln\ln(\frac{1}{\alpha})$ is not large enough, the value $\ln(\frac{\overline \gamma}{m})$ also
affects the throughput gain as shown in (\ref{eq:UB_scale_nakagami}) and (\ref{eq:LB_scale_nakagami}), i.e., the
effect of the average SNR $\overline \gamma$ and the shape parameter $m$ is not negligible. In this case, we
have the following remark:
\begin{remark}\label{lm:scale_nakagami_2}
If $\overline \gamma \gg m$, we have $g_{\rm UB}\approx g_{\rm LB} \approx \log_2e \cdot\frac{ \ln({\overline
\gamma}) + \ln\ln(\frac{1}{\alpha})}{E[R]}$ since the effect of $m$ on the throughput gain is negligible. Moreover,
if $\overline \gamma \rightarrow \infty$ and $\ln \overline \gamma \gg \ln\ln(\frac{1}{\alpha})$, we have $g_{\rm
UB}\approx g_{\rm LB} \approx \frac{ \log_2({\overline \gamma})}{E[R]}\approx 1$. Thus, no throughput gain is
expected in the high SNR regime. On the other hand, if $\overline \gamma$ is not large enough, the shape parameter
$m$ may affect the throughput gain. A larger shape parameter reduces the throughput gain of CS as shown in
(\ref{eq:UB_scale_nakagami}) and (\ref{eq:LB_scale_nakagami}).
\end{remark}

\section{Feedback Reduction for CS}\label{sec:Feedback_reduction}
CS requires all users to feed their CDF values back to the BS at each time slot, which may cause severe feedback
overhead as the number of users in a cell increases. Several threshold-based feedback reduction
schemes~\cite{Gesbert04, Kim07, Hwang08} have been proposed for various scheduling algorithms such as PFS and NSNR.
However, none of these schemes supports different CARs among users, as CS does. Consequently, these feedback
reduction schemes cannot be applied to CS. In this section, we propose CS-FR, a novel feedback reduction scheme for
CS, to reduce the feedback overhead. To the best of our knowledge, CS-FR is the first feedback reduction scheme
that considers diverse users who require different CARs in scheduling.

\subsection{Threshold Design and Channel Access Ratio}
For {\em equally weighted users} in a cell, since all users send feedback information that is identically and
uniformly distributed between $[0, 1]$, we can simply set the same threshold $\eta_{\rm th}$ for all users to
achieve the identical CAR. If the feedback information of the $i$-th user, $U_i$, is larger than $\eta_{\rm th}$,
the $i$-th user sends $U_i$ to BS. If no user satisfies the condition, the BS does not receive any feedback
information from the users and it selects a user in a RRS manner. When no feedback happens in the slot, we call
such a slot a no-feedback (NFB) slot. We further define a slot in which more than one users send feedback to the BS as
a feedback (FB) slot. For {\em unequally weighted users}, the difficulty in determining the thresholds is to
satisfy the CARs in both FB and NFB slots. Intuition tells us that different users may have different threshold
values due to their different weights. However, we show in the following theorem that it is possible to satisfy the
CARs of different users with the same threshold $\eta_{\rm th}$ for all users.
\begin{theorem} \label{tm:acces_ratio_OF}
The CARs of the users with CS-FR are still maintained if the threshold of all user is set to $p^{1/\sum_{j=1}^n
w_j}$, where $p$ and $w_j$ denote the NFB probability and the weight of the $j$-th user, respectively.
\end{theorem}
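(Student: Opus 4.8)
The plan is to compute the long-run fraction of slots in which user $i$ is selected and show it equals $\alpha_i = w_i/\sum_{j=1}^n w_j$, the CAR guaranteed by plain CS in Lemma~\ref{lm:CDF}. Throughout I would write $W = \sum_{j=1}^n w_j$ and let $V_i = U_i^{1/w_i}$ denote the feedback quantity of user $i$ from (\ref{eq:CDF_policy}), where $U_i = F_i(\gamma_i)$ is uniform on $[0,1]$ by (\ref{eq:uniform}). Since the $U_i$ are independent across users, so are the $V_i$, and each has CDF $\Pr\{V_i \le v\} = \Pr\{U_i \le v^{w_i}\} = v^{w_i}$ with density $w_i v^{w_i-1}$ on $[0,1]$. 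Under CS-FR, user $i$ transmits only when $V_i > \eta_{\rm th}$.

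First I would fix $\eta_{\rm th}$ by computing the NFB probability. An NFB slot occurs exactly when every $V_i \le \eta_{\rm th}$, so by independence $p = \prod_{i=1}^n \eta_{\rm th}^{w_i} = \eta_{\rm th}^{W}$, which inverts to $\eta_{\rm th} = p^{1/W}$, matching the claimed threshold. The key structural observation for FB slots is that whenever at least one user feeds back, the user with the globally largest $V_i$ is necessarily among the feeders, since any silent user has $V_j \le \eta_{\rm th}$ and hence cannot exceed the winner; therefore the BS selects precisely $\arg\max_i V_i$, the same winner as in plain CS. Consequently, the probability that user $i$ both feeds back and wins is $\int_{\eta_{\rm th}}^1 \bigl(\prod_{j \ne i} v^{w_j}\bigr)\, w_i v^{w_i-1}\,dv = w_i \int_{\eta_{\rm th}}^1 v^{W-1}\,dv = \frac{w_i}{W}\bigl(1-\eta_{\rm th}^W\bigr) = \alpha_i(1-p)$, using $\eta_{\rm th}^W = p$.

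The remaining contribution comes from NFB slots, which occur with probability $p$ and in which the BS falls back to a (weighted) round-robin rule that allocates user $i$ a long-run share $\alpha_i$ of those slots. Adding the two disjoint contributions yields $\Pr\{\text{user } i \text{ selected}\} = \alpha_i(1-p) + p\,\alpha_i = \alpha_i$, which establishes that the CARs are preserved and completes the argument.

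The step I expect to be the main obstacle is the NFB handling for unequally weighted users: the computation only closes if the round-robin fallback in NFB slots is itself weighted so that user $i$ receives fraction $\alpha_i$ of those slots. An unweighted (equal $1/n$) fallback would leave a residual term $p(\alpha_i - \tfrac{1}{n})$ and break the claim. I would therefore state explicitly that CS-FR employs weighted round-robin matched to the CARs in NFB slots, which reduces to ordinary round-robin in the equally weighted case. The only other point requiring care is the \emph{argmax-coincidence} claim used in the FB-slot integral; I would justify it by noting that a non-feeding user contributes a value at most $\eta_{\rm th}$, strictly below the value of any feeder, so discarding silent users never alters the maximizer in an FB slot.
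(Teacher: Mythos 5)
Your proposal is correct and follows essentially the same route as the paper: derive $\eta_{\rm th}=p^{1/\sum_j w_j}$ from the NFB probability, compute the FB-slot win probability by the same integral (yours in the variable $v=U_i^{1/w_i}$, the paper's in $u=U_i$, related by substitution), and add the NFB contribution $p\,\alpha_i$ from the fallback rule. Your explicit caveat that the NFB fallback must be a \emph{weighted} round-robin matched to the $\alpha_i$ is a fair reading of the paper's terser statement that users are selected "with RRS (or random scheduling) so that the CAR $\alpha_i$ is still maintained," but it does not change the argument.
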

\begin{proof}
Given the threshold $\eta_{\rm th}$ for all users, the $i$-th user feeds back the value
$U_i^{\frac{1}{w_i}}$ if it is larger than $\eta_{\rm th}$. With this setting, we show that
the CAR of the $i$-th user in the NF slots is equal to $\alpha_i =
\frac{w_i}{\sum_{j=1}^nw_j}$.

With the proposed threshold setting for CS-FR, the NFB probability is given by:
\begin{equation}\label{eq:P_out}
\begin{array}{ll}
 p  &= \Pr\left\{U_j^{\frac{1}{w_j}} < \eta_{\rm th},\forall j \in \{1, 2, \cdots, n\}\right\}\\
              &= \prod_{j=1}^n \eta_{\rm th}^{w_j}\\
              &= \eta_{\rm th}^{\sum_{j=1}^n w_j}.
\end{array}
\end{equation}
For a given NFB constraint $p$, the threshold $\eta_{\rm th}$ can be set to
$p^{1/\sum_{j=1}^n w_j}$. Hence, the selection probability for the $i$-th user in each FB
slot is
\begin{equation}
\begin{array}{l}
\text{Pr}\{{\rm user~}i{\rm~is~selected} | {\rm FB~slot}\} \\
=\frac{\text{Pr}\{{\rm user~}i{\rm~is~selected,~the~slot~is~FB~slot} \}}{\text{Pr}\{ \text{the slot is FB slot}\}}\\
=\frac{\text{Pr}\{ U_i^{\frac{1}{w_i}} > \eta_{\rm th} ~\&~U_i^{\frac{1}{w_i}} > U_j^{\frac{1}{w_j}}, \forall j \in \{1, 2, \cdots,i-1, i+1, \cdots n\}  \}}{1 - \text{Pr}\{U_j^{\frac{1}{w_j}} < \eta_{\rm th}, \forall j \in \{1, 2, \cdots,n\}\}}\\
=\frac{\int_{\eta_{\rm th}^{w_i}}^1 \prod_{j=1, j \neq i}^n \text{Pr}\{U_j < u^{\frac{w_j}{w_i}}\} f_{U_i}(u)du}{1 - p}\\
=\frac{\int_{\eta_{\rm th}^{w_i}}^1 u^{\sum_{j=1, j \neq i}^n\frac{w_j}{w_i}} du}{1 - p}
\\=\frac{\frac{w_i}{\sum_{j=1}^n w_j} \left( 1 - \eta_{\rm th}^{\sum_{j=1}^n
w_j}\right)}{1 - p} \\
= \frac{w_i}{\sum_{j=1}^n w_j} \\
= \alpha_i.
\end{array}
\end{equation}

In the NFB slots, the users are selected with RRS~(or random scheduling) so that the CAR $\alpha_i$ for the $i$-th
user is still maintained. Thus, the total CAR for the $i$-th user is
\begin{equation}
\alpha_i\Pr\{\text{FB slot}\} + \alpha_i\Pr\{\text{NFB slot}\} = \alpha_i.
\end{equation}
\end{proof}

Note that we do not assume any specific channel distribution in \textit{Theorem~\ref{tm:acces_ratio_OF}} and it can
be applied to any channel distribution. Notably, selecting the same threshold value for all users who have
different CARs substantially simplifies the system design and implementation.

\subsection{Feedback Overhead}\label{subsec:feedback_overhead}
In this subsection, we consider the average feedback overhead with CS-FR.
\begin{theorem}\label{tm:AVG_FB}
With CS-FR, the average feedback overhead in each slot is upper-bounded by $n \left(1 - p^{\frac{1}{n}}\right)$,
where $p$ denotes the NFB probability. The equality holds when all users are equally weighted. Another upper-bound
of the feedback overhead is given by $-\ln p$, which is valid regardless of the number of users and the weights of
users.
\end{theorem}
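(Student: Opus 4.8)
The plan is to compute the expected number of feeding-back users exactly and then bound it in two different ways. Let $B$ denote the number of users that send feedback in a slot. Recalling from the proof of \textit{Theorem~\ref{tm:acces_ratio_OF}} that the common threshold is $\eta_{\rm th} = p^{1/\sum_{j=1}^n w_j}$ and that the $i$-th user transmits precisely when $U_i^{1/w_i} > \eta_{\rm th}$, I would first observe that, since $U_i$ is uniform on $[0,1]$, the $i$-th user feeds back with probability
\[
\Pr\{U_i > \eta_{\rm th}^{w_i}\} = 1 - \eta_{\rm th}^{w_i} = 1 - p^{w_i/\sum_j w_j} = 1 - p^{\alpha_i}.
\]
By linearity of expectation applied to the feedback indicator of each user (taken unconditionally over all slots, FB and NFB alike),
\[
E[B] = \sum_{i=1}^n \left(1 - p^{\alpha_i}\right), \qquad \sum_{i=1}^n \alpha_i = 1.
\]
Both upper bounds then follow directly from this closed form.

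For the first bound, I would exploit the concavity of the map $\phi(\alpha) = 1 - p^{\alpha} = 1 - e^{\alpha \ln p}$. Since $0 < p < 1$ gives $\ln p < 0$, the term $e^{\alpha \ln p}$ is convex in $\alpha$, so $\phi$ is concave. Applying Jensen's inequality to $\phi$ with uniform weights $1/n$ and using $\frac{1}{n}\sum_i \alpha_i = \frac{1}{n}$ yields
\[
\frac{1}{n}\sum_{i=1}^n \phi(\alpha_i) \le \phi\!\left(\frac{1}{n}\sum_{i=1}^n \alpha_i\right) = \phi\!\left(\frac{1}{n}\right),
\]
that is, $E[B] \le n\left(1 - p^{1/n}\right)$. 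Strict concavity of $\phi$ forces equality exactly when all $\alpha_i$ coincide, i.e. for equally weighted users, which establishes the equality claim.

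For the second bound, I would instead bound each summand termwise via the elementary inequality $e^x \ge 1 + x$, valid for all real $x$. Taking $x = \alpha_i \ln p$ gives $p^{\alpha_i} = e^{\alpha_i \ln p} \ge 1 + \alpha_i \ln p$, hence $1 - p^{\alpha_i} \le -\alpha_i \ln p$. Summing and using $\sum_i \alpha_i = 1$ produces
\[
E[B] = \sum_{i=1}^n \left(1 - p^{\alpha_i}\right) \le -\ln p \sum_{i=1}^n \alpha_i = -\ln p,
\]
which holds for any number of users and any weight assignment. (Alternatively, one can derive this from the first bound by substituting $q = p^{1/n}$, whereby $n(1 - p^{1/n}) \le -\ln p$ reduces to the scalar inequality $1 - q \le -\ln q$; but the direct termwise argument is cleaner.)

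I do not anticipate a genuine obstacle. The only substantive step is recognizing that the per-user feedback probability $\eta_{\rm th}^{w_i}$ collapses to $p^{\alpha_i}$, after which both claims reduce to standard convexity facts (Jensen for the tight bound, $e^x \ge 1+x$ for the weight-free bound). The mild point worth stating carefully is that the feedback overhead is averaged over \emph{all} slots, so $B$ and its expectation are taken unconditionally; this is precisely what the marginal feedback probabilities above encode, and it is what makes the linearity-of-expectation computation legitimate without any independence assumption among users.
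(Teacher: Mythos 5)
Your proposal is correct, and for the first bound it coincides with the paper's argument: both compute the exact expected overhead $\sum_{i=1}^n (1-p^{\alpha_i})$ from the per-user feedback probability $1-\eta_{\rm th}^{w_i}=1-p^{\alpha_i}$ and then apply Jensen's inequality to the convex map $x\mapsto p^x$ (equivalently, your concave $\phi$), with equality exactly for equal weights. Where you diverge is the second bound. The paper derives $-\ln p$ from the first bound by asserting that $x\bigl(1-p^{1/x}\bigr)$ is increasing in $x$ and then passing to the limit $n\to\infty$ via $\lim_{n\to\infty}(1-x/n)^n=e^{-x}$; you instead bound each summand termwise using $e^x\ge 1+x$, giving $1-p^{\alpha_i}\le -\alpha_i\ln p$ and summing over $\sum_i\alpha_i=1$. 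Your route is more self-contained: it does not rely on the monotonicity of $x(1-p^{1/x})$, which the paper uses without proof, and it makes transparent that the $-\ln p$ bound holds for any $n$ and any weights directly, rather than as a supremum of the $n$-dependent bound. The paper's route, on the other hand, exhibits $-\ln p$ explicitly as the $n\to\infty$ limit of the tight equal-weight bound, which conveys the additional information that $-\ln p$ is asymptotically attained for many equally weighted users. Your closing remark that the two are linked through the scalar inequality $1-q\le -\ln q$ correctly identifies why both derivations land on the same constant.
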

\begin{proof}
For the $i$-th user, the average feedback overhead in each slot is given as:
\begin{equation}
\begin{array}{ll}
\mu_i = \Pr\{U_i^{\frac{1}{w_i}} \geq \eta_{\rm th}\} = \Pr\{U_i \geq \eta_{\rm
th}^{w_i}\} = 1 - \eta_{\rm th}^{w_i} \\~~~~~~~= 1 - p^{\frac{w_i}{\sum_{j=1}^n w_j}} = 1
- p^{\alpha_i}.
\end{array}
\end{equation}
The average feedback overhead in each slot in a cell is given as:
\begin{equation}
\mu = \sum_{i=1}^n \mu_i = n \left(1 - \frac{1}{n}\sum_{i=1}^n p^{\alpha_i}\right).
\end{equation}
Since $f(x)=p^x$ is a convex function of $x$ in a region $0<p<1$, we have
\begin{equation}
\mu \leq n \left(1 - p^{\frac{1}{n}\sum_{i=1}^n \alpha_i}\right) = n \left(1 -
p^{\frac{1}{n}}\right).
\end{equation}
The equality holds when $\alpha_1=\alpha_2=...=\alpha_n$, i.e., all users have the same
weight. Using the fact that $x(1-p^{\frac{1}{x}})$ is an increasing function over $x$ for
$x>0$ and $0<p<1$, and $\lim_{n \rightarrow \infty}(1-\frac{x}{n})^n = e^{-x}$, we have
\begin{equation}
\mu \leq \lim_{n \rightarrow \infty}n \left(1 - p^{\frac{1}{n}}\right) = -\ln p.
\end{equation}
\end{proof}

\subsection{Throughput analysis}
In order to analyze the throughput characteristic of CS-FR, we first investigate the SNR distribution for a user
given that it is selected.
\begin{theorem}\label{lm:CDF_FR}
With CS-FR, if a user's SNR distribution is $F(\gamma)$, its CAR is $\alpha \in [0, 1]$, and the NFB probability is
$p$, the SNR distribution given this user is selected is obtained as
\begin{equation}\label{eq:CDF_FR}
F_{{\rm Sel}}(\gamma)  = \left\{
                        \begin{array}{ll}
                        p^{(1 - \alpha)} F(\gamma),& {\rm if~}0<\gamma < F^{-1}(p^{ \alpha}),\\
                        F(\gamma)^{\frac{1}{\alpha}}, &{\rm if~}\gamma \geq F^{-1}(p^{\alpha}).
                        \end{array}
                        \right.
\end{equation}
\end{theorem}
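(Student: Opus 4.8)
The plan is to compute the conditional CDF directly from its definition,
\[
F_{\rm Sel}(\gamma) = \frac{\Pr\{\gamma_i \leq \gamma,\ \text{user } i \text{ selected}\}}{\Pr\{\text{user } i \text{ selected}\}},
\]
where the denominator equals the CAR $\alpha_i = \alpha$ by \textit{Theorem~\ref{tm:acces_ratio_OF}}. The event that user $i$ is selected is the disjoint union of selection in a feedback (FB) slot and selection in a no-feedback (NFB) slot, so I would split the numerator into these two contributions and treat each separately, writing $U_i = F(\gamma_i)$ (uniform on $[0,1]$) and using the threshold identity $\eta_{\rm th}^{w_i} = p^{w_i/\sum_j w_j} = p^{\alpha}$ established inside the proof of \textit{Theorem~\ref{tm:acces_ratio_OF}}.

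For the FB contribution the analysis parallels \textit{Lemma~\ref{lm:CDF}}, except that user $i$ feeds back only when $U_i \geq \eta_{\rm th}^{w_i} = p^{\alpha}$. Conditioning on $U_i = u$ and using independence across users, the probability that user $i$ both feeds back and wins is $\int_{p^{\alpha}}^{F(\gamma)} \prod_{j\neq i} u^{w_j/w_i}\, du$ for $\gamma \geq F^{-1}(p^{\alpha})$, and zero otherwise since below the threshold the user never transmits feedback. Because $\sum_{j\neq i} w_j/w_i = 1/\alpha - 1$, this integral evaluates to $\alpha[F(\gamma)^{1/\alpha} - p]$, using $(p^{\alpha})^{1/\alpha} = p$.

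The NFB contribution is the part I expect to require the most care. In an NFB slot every user fails its threshold, so in particular $U_i < p^{\alpha}$, i.e.\ $\gamma_i < F^{-1}(p^{\alpha})$; moreover the selection within an NFB slot is carried out by RRS (or random scheduling) and is therefore independent of the SNR values, picking user $i$ with probability $\alpha_i$. Hence this contribution factors as $\alpha_i \cdot \Pr\{\gamma_i \leq \gamma,\ \text{NFB slot}\}$, and by independence of users $\Pr\{\gamma_i \leq \gamma,\ \text{NFB}\} = \Pr\{\gamma_i \leq \gamma,\ U_i < p^{\alpha}\}\prod_{j\neq i}\eta_{\rm th}^{w_j}$, where the product equals $\eta_{\rm th}^{\sum_{j\neq i}w_j} = p^{1-\alpha}$. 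The remaining factor needs a case split: for $\gamma < F^{-1}(p^{\alpha})$ the constraint $U_i < p^{\alpha}$ is automatic and it equals $F(\gamma)$, whereas for $\gamma \geq F^{-1}(p^{\alpha})$ it equals $p^{\alpha}$. This split, driven by whether the user's own feedback-suppression constraint binds at the point $\gamma$, is precisely what produces the two-piece form of the claimed CDF.

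Finally I would assemble the pieces and divide by $\alpha$. For $\gamma < F^{-1}(p^{\alpha})$ only the NFB term survives, giving $p^{1-\alpha}F(\gamma)$; for $\gamma \geq F^{-1}(p^{\alpha})$ the FB and NFB terms sum to $\alpha[F(\gamma)^{1/\alpha} - p] + \alpha p = \alpha F(\gamma)^{1/\alpha}$, giving $F(\gamma)^{1/\alpha}$. As a consistency check, the two pieces agree at $\gamma = F^{-1}(p^{\alpha})$ (both equal $p$) and $F_{\rm Sel}(\gamma)\to 1$ as $\gamma\to\infty$, confirming that the result is a valid CDF.
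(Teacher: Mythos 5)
Your proposal is correct and follows essentially the same route as the paper's Appendix E: both decompose the selection event into FB and NFB slot contributions, evaluate the FB part via the integral $\int_{p^{\alpha}}^{F(\gamma)} u^{1/\alpha-1}\,du$ and the NFB part via the RRS selection times the all-users-below-threshold probability, and recover the two-piece form from the case split at $F^{-1}(p^{\alpha})$. The only difference is organizational — you sum unnormalized joint probabilities and divide by $\alpha$ once, whereas the paper computes the two conditional CDFs separately and mixes them with weights $p$ and $1-p$ — and the two computations are equivalent.
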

\begin{proof}
See Appendix~\ref{apx:F_i_sel}.
\end{proof}

We also define the following throughput function for analyzing the throughput of CS-FR.
\begin{definition}
\begin{equation}
S_L(x, \alpha)  = \int_0^{F^{-1}(x)} R(\gamma)d[F(\gamma)]^{\frac{1}{\alpha}} = S(0,
\alpha) - S(x, \alpha).\end{equation}
\end{definition} Then, $S_L(x, \alpha)$ and $S(x, \alpha,)$ have the following properties:
\begin{property}\label{pr:SL_alpha_inc}
$\alpha S_L(x, \alpha)$ is an increasing function of $\alpha$.
\end{property}
\begin{property}\label{pr:FR_p_dec}
$S(x^{\alpha}, \alpha) + x^{1-\alpha}S_L(x^{\alpha}, 1)$ is a decreasing function of $x$.
\end{property}
\begin{proof}
See Appendix~\ref{apx:Property}.
\end{proof}
Based on (\ref{eq:CDF_FR}), the throughput of CS-FR is calculated as
\begin{equation}\label{eq:S_CDFOF}
\begin{array}{ll}
S_{\rm CS-FR}(\alpha, p) &= \alpha \int_{F^{-1}(p^{\alpha})}^\infty R(\gamma) d [F(\gamma)]^{\frac{1}{\alpha}} \\
&~~~+ \alpha p^{1-\alpha} \int_0^{F^{-1}(p^{\alpha})} R(\gamma) d F(\gamma) \\
 &= \alpha S(p^{\alpha}, \alpha) + \alpha  p^{1-\alpha}S_L(p^{\alpha}, 1).
\end{array}
\end{equation}
We can observe that the throughput of any user depends on its CAR $\alpha$ and the NFB probability $p$ and is
independent from other users. From {\em Property \ref{pr:FR_p_dec}}, we can conclude that $S_{\rm CS-FR}$ is an
increasing function of $p$. Hence, there is no optimal threshold for CS-FR and, in order to obtain a higher
throughput, we should reduce the value of $p$. When $p=0$, CS-FR is identical to CS while CS-FR is identical to RRS
when $p=1$. Thus, CS-FR always shows a better throughput performance than RRS and a worse throughput performance
than CS. Compared to CS, the lower- and upper-bound throughputs of CS-FR are characterized by the following theorem:
\begin{theorem}\label{lm:FR_Low}
The lower- and upper-bound of $S_{\rm CS-FR}(\alpha, p)$ are given as
\begin{equation} \label{eq:FR_Low}
1-p  \leq  1-p + \alpha p^{2-\alpha } \leq \frac{S_{\rm CS-FR}(\alpha, p)}{S_{\rm CS}(\alpha)} \leq 1,
\end{equation}
\end{theorem}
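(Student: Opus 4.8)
The plan is to work entirely with the closed forms already in hand: $S_{\rm CS}(\alpha)=\alpha S(0,\alpha)$ from (\ref{eq:S_CDF}) and $S_{\rm CS-FR}(\alpha,p)=\alpha S(p^{\alpha},\alpha)+\alpha p^{1-\alpha}S_L(p^{\alpha},1)$ from (\ref{eq:S_CDFOF}), so that the object to be sandwiched is the ratio $\frac{S_{\rm CS-FR}}{S_{\rm CS}}=\frac{S(p^{\alpha},\alpha)+p^{1-\alpha}S_L(p^{\alpha},1)}{S(0,\alpha)}$. The two outer inequalities will then reduce to applying the monotonicity properties already established, and the leftmost inequality $1-p\le 1-p+\alpha p^{2-\alpha}$ is immediate since $\alpha p^{2-\alpha}\ge 0$.

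For the rightmost inequality (ratio $\le 1$), I would observe that $S_{\rm CS-FR}(\alpha,p)=\alpha\,h(p)$, where $h(x)=S(x^{\alpha},\alpha)+x^{1-\alpha}S_L(x^{\alpha},1)$ is exactly the function shown to be decreasing in $x$ by \emph{Property \ref{pr:FR_p_dec}}. Since $p\ge 0$ we have $h(p)\le h(0)$, and because $0^{1-\alpha}=0$ annihilates the second summand (with $S_L(0,1)=0$ anyway), $h(0)=S(0,\alpha)$. Hence $S_{\rm CS-FR}(\alpha,p)\le\alpha S(0,\alpha)=S_{\rm CS}(\alpha)$, which is the upper bound.

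The middle inequality is the crux. The tempting move of bounding the summands $S(p^{\alpha},\alpha)$ and $p^{1-\alpha}S_L(p^{\alpha},1)$ separately against $S(0,\alpha)$ fails, because $S_L(p^{\alpha},1)$ can be made arbitrarily small relative to $S(0,\alpha)$ when the rate is concentrated at high SNR. The trick is instead to re-couple the two terms. First apply \emph{Property \ref{pr:SL_alpha_inc}}, i.e. $\alpha S_L(p^{\alpha},\alpha)\le S_L(p^{\alpha},1)$, and use the definition $S_L(p^{\alpha},\alpha)=S(0,\alpha)-S(p^{\alpha},\alpha)$ to get $S_L(p^{\alpha},1)\ge\alpha\bigl[S(0,\alpha)-S(p^{\alpha},\alpha)\bigr]$. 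Substituting this into the ratio and collecting terms yields $\frac{S_{\rm CS-FR}}{S_{\rm CS}}\ge(1-\alpha p^{1-\alpha})\frac{S(p^{\alpha},\alpha)}{S(0,\alpha)}+\alpha p^{1-\alpha}$. Here the coefficient $1-\alpha p^{1-\alpha}$ is nonnegative (as $\alpha\le 1$ and $p^{1-\alpha}\le 1$), so I may lower-bound the surviving ratio without reversing the inequality; \emph{Property \ref{pr:p_inc}}, evaluated at $x=0$ and $x=p^{\alpha}$, gives precisely $S(p^{\alpha},\alpha)\ge(1-p)S(0,\alpha)$. Plugging in $\frac{S(p^{\alpha},\alpha)}{S(0,\alpha)}\ge 1-p$ and simplifying collapses the right-hand side to $(1-\alpha p^{1-\alpha})(1-p)+\alpha p^{1-\alpha}=1-p+\alpha p^{2-\alpha}$, the claimed lower bound.

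The main obstacle is therefore not any single estimate but the recognition that the two summands must be bounded jointly: the role of \emph{Property \ref{pr:SL_alpha_inc}} is to trade $S_L(p^{\alpha},1)$ back for the quantities $S(0,\alpha)$ and $S(p^{\alpha},\alpha)$, so that a single application of \emph{Property \ref{pr:p_inc}} can close the argument. The one place where care is required is the sign check $1-\alpha p^{1-\alpha}\ge 0$, since this is what guarantees that replacing $S(p^{\alpha},\alpha)/S(0,\alpha)$ by its lower bound $1-p$ preserves, rather than reverses, the direction of the inequality.
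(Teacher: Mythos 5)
Your proposal is correct and follows essentially the same route as the paper's own proof: the upper bound via \emph{Property \ref{pr:FR_p_dec}} at $x=0$, and the lower bound by first trading $S_L(p^{\alpha},1)$ for $\alpha S_L(p^{\alpha},\alpha)=\alpha[S(0,\alpha)-S(p^{\alpha},\alpha)]$ via \emph{Property \ref{pr:SL_alpha_inc}} and then applying \emph{Property \ref{pr:p_inc}} to get $S(p^{\alpha},\alpha)\geq(1-p)S(0,\alpha)$. Your explicit sign check that $1-\alpha p^{1-\alpha}\geq 0$ is a small point the paper leaves implicit, but the argument is otherwise identical.
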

\begin{proof}
For the proof of the lower-bound, see Appendix~\ref{apx:lm_FR_Low}, which applies {\em Properties \ref{pr:p_inc}}
and {\em \ref{pr:SL_alpha_inc}}. The upper-bound can be obtained from {\em Property~\ref{pr:FR_p_dec}} where the
case of $x=0$ stands for $S_{\rm CS}(\alpha)$.
\end{proof}

From the lower bound, we can conclude that the throughput loss ratio of CS-FR to CS is smaller than the NFB
probability $p$. Note that {\em Theorem \ref{lm:FR_Low}} is applicable to any data rate function and channel
statistics. {\em Theorem \ref{tm:AVG_FB}} and {\em Theorem \ref{lm:FR_Low}} lead to the following remarks for
CS-FR:
\begin{remark}\label{rm:CS-FR}
1) There is a tradeoff between throughput and feedback overhead. A larger feedback overhead gives a higher
throughput because they are both decreasing functions of $p$. 2) The feedback overhead is upper-bounded by the
negative natural logarithm of the throughput loss ratio, i.e., if each user can tolerate the throughput loss of at
most $p$ compared to CS, we can design CS-FR with the average feedback overhead smaller than $-\ln p$.
\end{remark}

In the case of Nakagami-$m$ fading channels, we can apply (\ref{eq:S_nakagami}) to derive the throughput
performance of CS-FR.

\section{Fairness Aspect of CS}\label{sec:Fairness}

{Although CS satisfies the CAR requirements and has interesting properties as discussed in Sections
\ref{sec:Analysis} and \ref{sec:Feedback_reduction}, the specific property of CS that distinguishes it from Liu's
scheduling algorithm and the distribution fairness scheduling algorithm, both of which also satisfy the CAR
requirements of users, has not been considered in the literature. In this section, we compare the fairness aspects
of those algorithms as they were all proposed for fair resource assignment in multi-user systems. Before we
investigate the fairness aspect in detail, we first introduce \textit{Liu's scheduling algorithm}~\cite{Liu01} and
\textit{the distribution fairness scheduling algorithm}~\cite{Qin04}.} In Liu's scheduling algorithm, BS selects a
user in each slot by using the following criterion:
\begin{equation}\label{eq:Liu}
\arg\max_{i \in \{1,2,..., n\}}[R_i(t)+c_i],
\end{equation}
where the offset $c_i$ is determined in order to satisfy the given CAR requirements. Liu's scheduling algorithm
maximizes the sum throughput for the given CAR requirements of users. In the distribution fairness scheduling
algorithm, the BS selects a user in each slot by using the following criterion:
\begin{equation}\label{eq:Qin}
\arg\max_{i \in \{1,2,..., n\}}[F_{i}(\gamma_i(t))+d_i],
\end{equation}
where the offset $d_i$ is determined in order to satisfy the given CAR requirements. The distribution fairness
scheduling algorithm maximizes the sum of the CDF values of the selected users.

All of CS, Liu's scheduling algorithm, and the distribution fairness scheduling algorithm satisfy the CAR
requirements, but they result in different throughput performance to users because of their diverse user selection
policies. Note that these three algorithms were originally proposed to address the fairness issue when exploiting
multi-user diversity in wireless communication systems. If fairness is defined as the satisfaction of the CAR
requirements, all three algorithms are equally fair. However, the different throughput performance of these
algorithms motivates us to reconsider the fairness issue for the scheduling algorithms that can satisfy the CAR
requirements. While CAR is apparently an important fairness criterion, it is not enough to capture all aspects of
fairness among users. On the other hand, the degree of achieved multi-user diversity can be another consideration
for fairness of users. A fair scheduling algorithm may aim at an identical degree of multi-user diversity for all
users. Users may feel unfair if the degrees of achieved multi-user diversity of users are different, even though
users satisfy their required CARs. In characterizing the degrees of achieved multi-user diversity, we take the
following two considerations:
\begin{itemize}
\item Exploiting multi-user diversity means that the BS selects a user when its channel has a high quality. Hence, a
criterion to measure the quality of assigned resource is required. The CDF value of SNR is a possible candidate
because it represents the quality of the channel gain with a real number in $[0, 1]$ and it is independent on the
average SNRs and the SNR distributions. We define $D(\alpha)$ as the average CDF value of a user given that the user is
selected, and it represents the quality of the assigned resource in this paper. {Then, it is expressed as
\begin{equation}
D(\alpha) =  \int_0^\infty F(\gamma) d F_{\rm sel}(\gamma),
\end{equation}
where $F_{\rm sel}(\gamma)$ is the SNR distribution given that the user is selected.} A larger value of $D(\alpha)$
indicates a better quality of assigned resource in the average sense. SNR itself (or data rate achieved from the
assigned resource) cannot be the index of quality of assigned resource because different users have different
average SNRs and different SNR distributions. Thus, directly comparing the SNR values of users results in
unfairness among users.

\item It is well known that a larger number of users in a cell provides a higher multi-user diversity. Since a
larger number of users can be interpreted as a smaller CAR for each user, a user with a smaller CAR has a higher
potential of exploiting multi-user diversity. Therefore, we take into account the different potentials from the
different CARs of users. {\em Lemma \ref{tm:UB}} gives us a guideline for characterizing this potential. It tells
us that the best quality of assigned resource for a given CAR $\alpha$ is obtained by selecting the user whose SNR is larger than $F^{-1}(1-\alpha)$. Let $D_{\rm UB}(\alpha)$ denote the upper-bound of the average CDF
value obtained by this optimal scheduling algorithm, which is given as
\begin{equation}
D_{\rm UB}(\alpha) = \frac{1}{\alpha}\int_{F^{-1}(1-\alpha)}^\infty F(\gamma) d F(\gamma) = \frac{1}{2}(2-\alpha),
\end{equation}
and obtained by replacing $R(\gamma)$ by $F(\gamma)$ in (\ref{eq:S_UB}).
\end{itemize}

The closeness of $D(\alpha)$ to $D_{\rm UB} (\alpha)$ means a higher degree of multi-user diversity is achieved.
Hence, we define the degree of achieved multi-user diversity for a user as the ratio of $D(\alpha)$ to $D_{\rm UB}
(\alpha)$. It is expressed as:
\begin{equation}\label{eq:QoR_2}
I_D (\alpha) = \frac{D(\alpha)}{D_{\rm UB}(\alpha)} \leq 1,
\end{equation}
{The upper-bound in (\ref{eq:QoR_2}) can be obtained by simply replacing $R(\gamma)$ by $F(\gamma)$ in
Appendix~\ref{apx:lm_UB}.} Given the CAR requirement $\alpha$ of a user, $I_D(\alpha)$ represents the degree of
achieved multi-user diversity with scheduling. A fair scheduling algorithm should provide similar values of $I_D$
for all users in spite of their diverse CAR requirements. If $I_D(\alpha)$ approaches 1, we can consider that the
scheduling algorithm optimally exploits multi-user diversity. Since the primary objective of the scheduling
algorithms considered in this paper is to exploit multi-user diversity, a good scheduling algorithm also maximizes
all users' $I_D$ values, i.e., maximizes
\begin{equation}\label{eq:quantitative_index}
I_{D,\min} = \min_{i \in \{1,2,..., n\}} {I_{D, i}}(\alpha_i).
\end{equation}
A good scheduling algorithm not only provides similar values of $I_D$ for all users, but also maximizes $I_{D,
\min}$. We define $I_{D,\min}$ as the QFI of a scheduling algorithm in this paper. It is notable that a QFI around
1 implicitly means that the $I_D$ values of all users are similar to each other because they have to be larger than
$I_{D,\min}$ and smaller than 1 by definition.

For systems with diverse users who require different CARs, now we can investigate the fairness among these users by
utilizing two aspects: {\em quantitative fairness} and {\em qualitative fairness}. Quantitative fairness indicates
the satisfaction of users CAR requirements, while qualitative fairness refers to the satisfaction on the quality of
the assigned resources to users. Qualitative fairness is closely related to the degrees of achieved multi-user
diversity for users. A fair scheduler should satisfy both criteria as much as possible. {Note that QFI in
(\ref{eq:quantitative_index}) is not the only fairness criterion to measure the degree of achieved multi-user
diversity, but is one possible candidate which is considered in this paper.} 

\begin{theorem} \label{tm:CDF_QF}
{If the CAR of a user is $\alpha$, the degree of achieved multi-user diversity with CS is given by }
\begin{equation}
I_D(\alpha) = \frac{2}{(1+\alpha)(2-\alpha)} \geq \frac{8}{9}.
\end{equation}
\end{theorem}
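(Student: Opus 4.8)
The plan is to evaluate $D(\alpha)$ explicitly for CS and then divide by the already-known upper bound $D_{\rm UB}(\alpha)$. First I would substitute the CS selected-user distribution from \emph{Lemma \ref{lm:CDF}}, namely $F_{\rm sel}(\gamma) = [F(\gamma)]^{1/\alpha}$, into the definition $D(\alpha) = \int_0^\infty F(\gamma)\,dF_{\rm sel}(\gamma)$. Since $F(\gamma)$ is a continuous increasing distribution, the natural change of variable is $u = F(\gamma)$, which maps the integral onto $[0,1]$ and eliminates all dependence on the specific fading statistics. This is the crucial observation: the quality measure $D(\alpha)$ is \emph{universal} across channel distributions, so the result will hold for arbitrary stationary fading with no moment assumptions.

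After the substitution the integral collapses to $\int_0^1 u\,d(u^{1/\alpha}) = \tfrac{1}{\alpha}\int_0^1 u^{1/\alpha}\,du$, a routine integral that evaluates to $\frac{1}{1+\alpha}$. Hence $D(\alpha) = \frac{1}{1+\alpha}$. Dividing by $D_{\rm UB}(\alpha) = \frac{1}{2}(2-\alpha)$ immediately yields the claimed closed form
\[
I_D(\alpha) = \frac{1/(1+\alpha)}{(2-\alpha)/2} = \frac{2}{(1+\alpha)(2-\alpha)}.
\]

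To establish the lower bound $I_D(\alpha) \ge \tfrac{8}{9}$, I would minimize $I_D$ over $\alpha \in [0,1]$, which is equivalent to maximizing the denominator $g(\alpha) = (1+\alpha)(2-\alpha) = 2 + \alpha - \alpha^2$. Setting $g'(\alpha) = 1 - 2\alpha = 0$ gives the interior critical point $\alpha = \tfrac{1}{2}$, where $g$ attains its maximum $g(\tfrac12) = \tfrac{9}{4}$ (the endpoints give $g(0)=g(1)=2$, both smaller, and $g$ is concave). Therefore $I_D(\alpha) \ge 2/(9/4) = \tfrac{8}{9}$, with equality exactly at $\alpha = \tfrac12$.

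I do not anticipate a genuine obstacle here; the only point needing care is the change of variable $u = F(\gamma)$, which is legitimate because $F$ is continuous and increasing (so the Stieltjes integral transforms cleanly) and because $F(\gamma) \in [0,1]$ keeps the transformed integrand bounded, guaranteeing $D(\alpha)$ is finite regardless of the fading. Everything downstream is elementary single-variable calculus.
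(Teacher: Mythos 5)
Your proposal is correct and follows essentially the same route as the paper: substitute $u = F(\gamma)$ to get $D(\alpha) = \int_0^1 u\,du^{1/\alpha} = \frac{1}{1+\alpha}$, divide by $D_{\rm UB}(\alpha) = \frac{1}{2}(2-\alpha)$, and bound the result by maximizing $(1+\alpha)(2-\alpha)$ at $\alpha = \tfrac12$ (the paper completes the square to write the denominator as $\tfrac{9}{4} - (\alpha - \tfrac12)^2$, which is equivalent to your calculus argument). No gaps.
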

\begin{proof}
For a user with CAR of $\alpha$, the average CDF value given the user is selected with CS is calculated as
\begin{equation}
D(\alpha) =  \int_0^\infty F(\gamma) d [F(\gamma)]^{\frac{1}{\alpha}} =  \int_{0}^{1} u d u^{\frac{1}{\alpha}} =
\frac{1}{1 + \alpha}.
\end{equation}
Consequently, the corresponding $I_D$ value is calculated as
\begin{equation}\label{eq:I_D_min}
I_D(\alpha) = \frac{D(\alpha)}{D_{\rm UB}(\alpha)} = \frac{2}{(1+\alpha)(2-\alpha)} = \frac{2}{\frac{9}{4} -
(\alpha - \frac{1}{2})^2} \geq \frac{8}{9}.
\end{equation}
\end{proof}

It is easy to check that all the scheduling algorithms considered in this section strictly satisfy quantitative
fairness, but provide different qualitative fairness defined in (\ref{eq:QoR_2}) due to their different user
selection policies. We shall investigate the qualitative fairness of these scheduling algorithms in more detail in
Section~\ref{sec:Numerical_results}.

\section{Numerical Results}\label{sec:Numerical_results}

{We first investigate the throughput and fairness aspects of CS, Liu's scheduling algorithm and the distribution
fairness scheduling algorithm. From the user selection policies of the scheduling algorithms shown in
(\ref{eq:CDF_policy}), (\ref{eq:Liu}) and (\ref{eq:Qin}), we can easily check that they show identical throughput
performance when all users have the same CAR requirement, experience the same fading channel, and have the same
average SNR. To investigate their differences, we first consider a system with two asymmetric users: one user
experiences a Rayleigh fading channel and the other experiences an Nakagami-$m$ fading channel with $m=4$. The
average SNR of both users is set to 0dB and the sum of their CARs is 1.}

{Fig.~\ref{fig:Throughput} shows the sum throughput performance when the CAR of the first user is varied. We can
see that all the algorithms show better throughput performance than RRS. Liu's scheduling algorithm shows the best
sum throughput performance as it is designed for maximizing sum throughput. However, it does not mean that Liu's
scheduling algorithm maximizes individual users' throughput, which will be observed from
Fig.~\ref{fig:Throughput_gain}. Moreover, although there exists some differences, the CDF-based scheduling, Liu's
scheduling algorithm and the distribution fairness scheduling algorithm show similar sum throughputs.}
\begin{figure}
\centering
\includegraphics[width=3.5in]{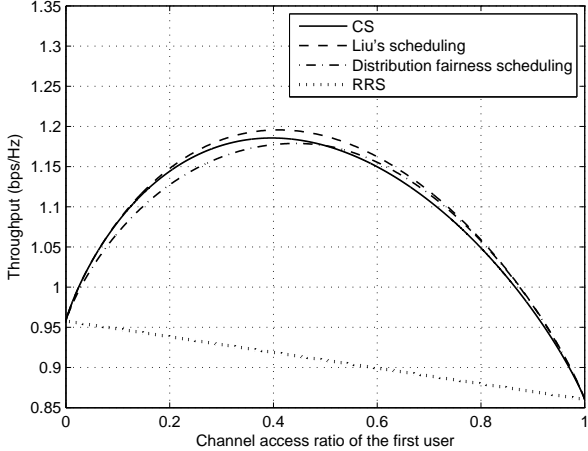}
    \caption{Sum throughput of the two users.}
    \label{fig:Throughput}
\end{figure}

{Fig.~\ref{fig:Throughput_gain} shows the throughput gain of each user when the user's CAR is varied. Note that the
x-axis label in Fig.~\ref{fig:Throughput_gain} is the CAR of the user being observed but not the first user's CAR. As all
the algorithms' throughput gains are larger than 1, they always show better performance than  RRS. Moreover, the
throughput gains of Liu's and the distribution fairness scheduling algorithms also increase as
the CAR decreases, which is the property of CS shown in {\em Theorem~\ref{tm:CDF_th}}. We can see that CS shows a
better throughput gain performance than Liu's and the distribution fairness scheduling
algorithms when the user's CAR becomes smaller. When a user's throughput gain of CS is larger than that of Liu's
scheduling algorithm, the other user's throughput gain of Liu's scheduling algorithm should be larger than that of
CS since Liu's scheduling algorithm maximizes sum throughput. Hence, Liu's scheduling algorithm does not always
provide the best throughput performance for all users. As the throughput performance of CS is independent from the
number of users contending for the channel and other users' channel statistics as indicated by (\ref{eq:S_CDF}),
the results of CS shown in Fig.~\ref{fig:Throughput_gain} are also valid for any stationary system where there
exists a user experiencing Nakagami-$m$ fading with $m=1,4$ and having the average SNR of 0dB.}

\begin{figure}
\centering
\includegraphics[width=3.5in]{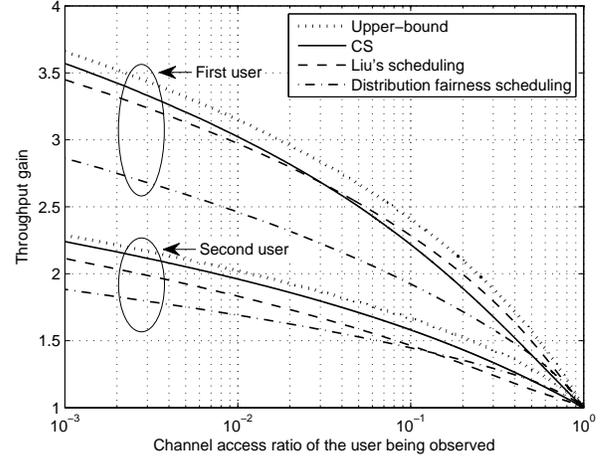}
    \caption{Throughput gain vs. channel access ratio}
    \label{fig:Throughput_gain}
\end{figure}

{From Fig.~\ref{fig:Throughput_gain}, we can observe that the throughput gain of CS is very close to the
upper-bound. Although we did not include the figure, we have also investigated the ratio between the throughput of
CS and the throughput upper-bound. CS can achieve at least 88\% and 93\% throughput performance compared to the
upper-bound when $m=1, 4$, respectively, for all values of the channel access ratios in [0, 1]. When the average
SNR is set to 10dB, it is observed that CS achieves at least 91\% and 95\% throughput performance compared to the
upper-bound throughput when $m=1, 4$, respectively.}

In order to investigate the fairness aspects, we plot Fig.~\ref{fig:I_D}, which shows the respective values of $I_D$ for RRS,
CS, Liu's scheduling algorithm and the distribution fairness scheduling algorithm when the CAR requirements of the
first and second users are 0.7 and 0.3. In this scenario, we can observe that Liu's and the distribution fairness scheduling algorithms favor the first user in exploiting multi-user diversity, whereas CS enables both users to exploit multi-user diversity in a more balanced manner. As discussed in Section~\ref{sec:Analysis}, Rayleigh fading provides a higher throughput gain compared to Nakagami-$m$ fading
channel with $m=4$. Hence, the first user is able to better exploit multi-user diversity and contribute more additional throughput than the second user with Liu's scheduling algorithm, which
aims to maximize the total throughput under the CAR constraints. This is why Liu's scheduling algorithm shows
the largest gap between the two users among the scheduling
algorithms considered. As expected, RRS shows the worst performance. {As RRS does not exploit multi-user diversity,
$D(\alpha)$ is always constant at 1/2. Since $D_{\rm UB} (\alpha)$ decreases as $\alpha$ increases, the first user
who requires a higher CAR shows a better $I_D(\alpha)$ value than the second user.}

\begin{figure}
\centering
\includegraphics[width=3.5in]{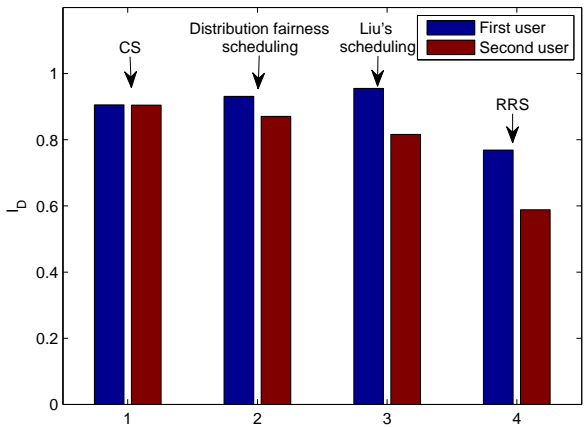}
    \caption{$I_{D}$ for the two users.}
    \label{fig:I_D}
\end{figure}

Fig. \ref{fig:I_CDF} shows the values of $I_{D, \min}$ for CS, Liu's scheduling algorithm, the distribution
fairness scheduling algorithm and RRS for varying CAR of the first user. CS yields relatively good qualitative
fairness for any CAR requirement as shown in Fig.~\ref{fig:I_CDF}. Notably, CS shows a predictable lower-bound of
$I_{D, \min}$, $8/9$, as proven by {\em Theorem~\ref{tm:CDF_QF}} while neither Liu's scheduling algorithm nor the
distribution fairness scheduling algorithm can guarantee any lower-bound of $I_{D, \min}$, which varies over
the number of competing users and the users' CAR requirements. {From the viewpoint of qualitative fairness, RRS
shows the worst performance as it does not exploit multi-user diversity and the value of $I_{D, \min}$ ranges
between $[\frac{1}{2}, \frac{2}{3}]$.}

\begin{figure}
\centering
\includegraphics[width=3.5in]{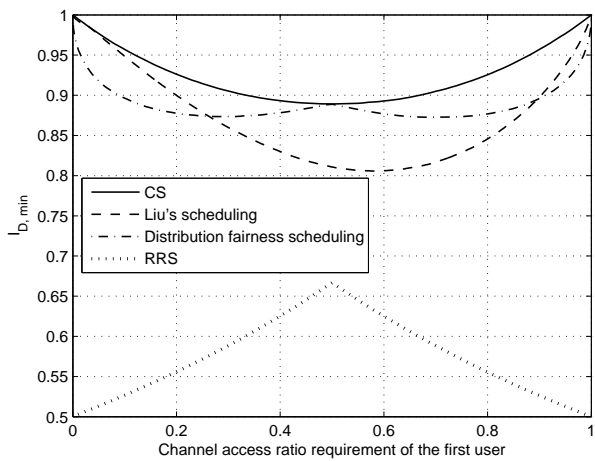}
    \caption{$I_{D, \min}$ vs. the channel access ratio requirement of the first user.}
    \label{fig:I_CDF}
\end{figure}

{Fig.~\ref{fig:Throughput_gain_SNR} shows the throughput gains of the two users with CS, Liu's scheduling and the
distribution fairness scheduling algorithm when both users' average SNRs varied from $0$dB to $20$dB and the CAR is
set to $0.1$ for the user being observed. The throughput gain decreases as the average SNR or the shape parameter
increases as shown in {\em Remark \ref{lm:scale_nakagami_2}}. We can observe the tradeoff between CS and Liu's
scheduling algorithm over different SNRs and CARs.}
\begin{figure}
\centering
\includegraphics[width=3.5in]{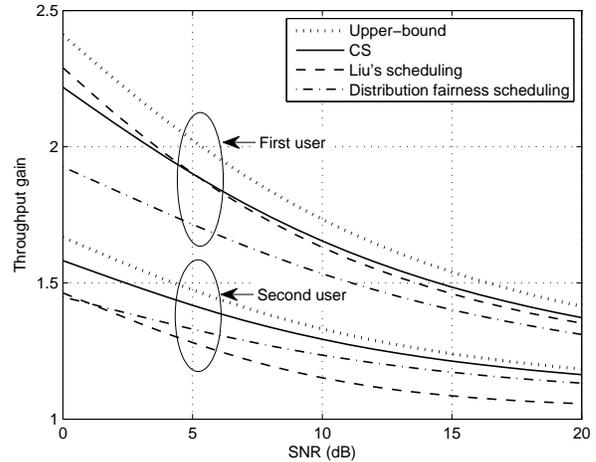}
    \caption{Throughput gain vs. average SNR.}
    \label{fig:Throughput_gain_SNR}
\end{figure}

Fig.~\ref{fig:Number_feedback} shows the average feedback overhead for when the NFB probability is varied from $0$
to $1$ and all the users are equally weighted. The average feedback overhead is defined as the number of users who
send feedback information to the BS. The feedback overhead of CS-FR decreases as the NFB probability increases.
From the figure, we can observe that a large feedback overhead is required as the number of users in a cell
increases for a given NFB probability. If the NFB probability is 2\%, i.e., $p = 0.02$, the average feedback
overhead are $2.71$, $3.24$, $3.84$, and $3.91$ when there are 5, 10, 100, and $\infty$ users in a cell,
respectively.

\begin{figure}
\centering
\includegraphics[width=3.5in]{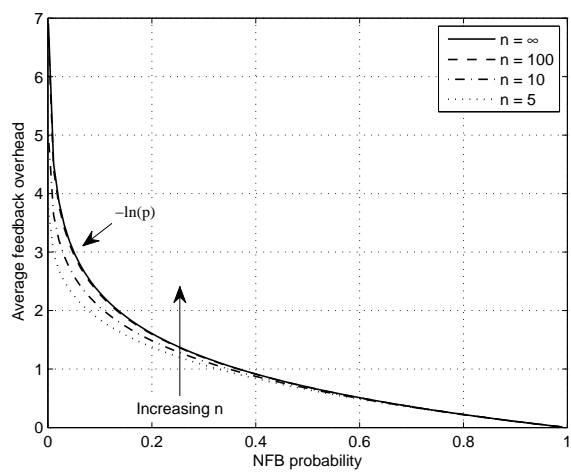}
    \caption{Average feedback overhead vs. NFB probability}
    \label{fig:Number_feedback}
\end{figure}

Fig.~\ref{fig:Feedback_ratio} shows the feedback overhead ratios with equally weighted users when the NFB
probability is varied from 0 to 1. The average feedback ratio represents the ratio of the average number of users
sending feedback information to BS with CS-FR over the total number of users. Note that this equally weighted case
yields an upper-bound for the unequally weighted case as discussed in Section \ref{subsec:feedback_overhead}. From
the figure we can observe that a larger NFB probability reduces the feedback overhead more significantly. If the
NFB probability is 2\%, i.e., $p = 0.02$, the average feedback ratio is equal to 54.3\%, 32.4\%, and 3.8\% when
$n=5,10,100$, respectively. Therefore, for given a NFB probability, CS-FR reduces the feedback overhead
significantly as the number of users increases.

\begin{figure}
\centering
    \includegraphics[width=3.5in]{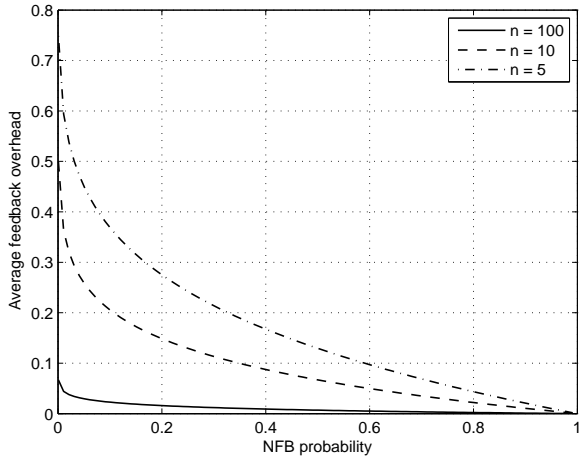}
    \caption{Average feedback ratio vs. NFB probability}
    \label{fig:Feedback_ratio}
\end{figure}

Fig.~\ref{fig:throughput_gain_m} shows the throughput gain of CS and CS-FR for varying $1/\alpha$. For equally
weighted users, $1/\alpha$ is equal to the number of users in the system. The average SNR of the user
being observed is set to 0dB. We can observe that the throughput gain of CS-FR increases as $1/\alpha$
increases and a larger NFB probability reduces the throughput gain with CS-FR. In Nakagami-$m$ fading channels,
CS-FR yields a larger throughput gain with small $m$ compared with CS since users experiencing more channel
fluctuations obtain a higher throughput gain compared to a user with less fluctuations.

\begin{figure}
\centering
    \includegraphics[width=3.5in]{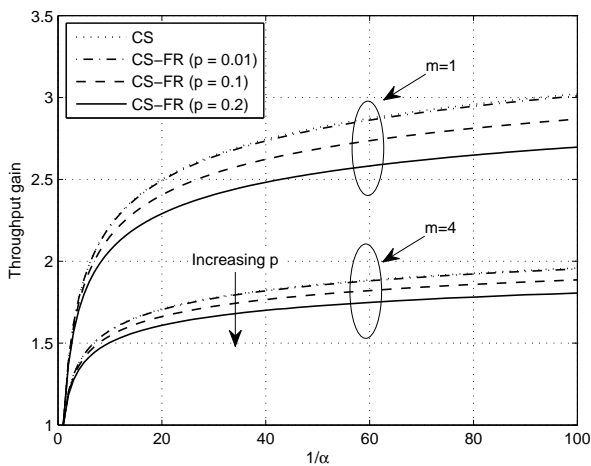}
    \caption{Throughput gain vs. $1/\alpha$}
    \label{fig:throughput_gain_m}
\end{figure}

Fig.~\ref{fig:S_WFS_FR} shows the throughput gains of CS-FR for various NFB probabilities over a Rayleigh fading channel, which is a special case of a Nakagami-$m$ fading channel with $m=1$, with the average SNR = 0dB. We
can observe that a smaller CAR and a smaller NFB probability yield a larger throughput gain.
Fig.~\ref{fig:S_WFS_FR_ratio} shows the throughput ratio between CS-FR and CS in
the same environment. We can observe that a smaller CAR yields a smaller value of throughput ratio. Thus, if CS-FR
is applied, a user with a smaller CAR is more prone to a throughput loss compared to a user with a larger
CAR. A similar trend can also be observed from the lower-bound throughput of CS-FR shown in (\ref{eq:FR_Low}) since
the formula, $1 - p+\alpha p^{1-\alpha}$, is an increasing function of $\alpha$.

\begin{figure}
    \centering
    \includegraphics[width=3.5in]{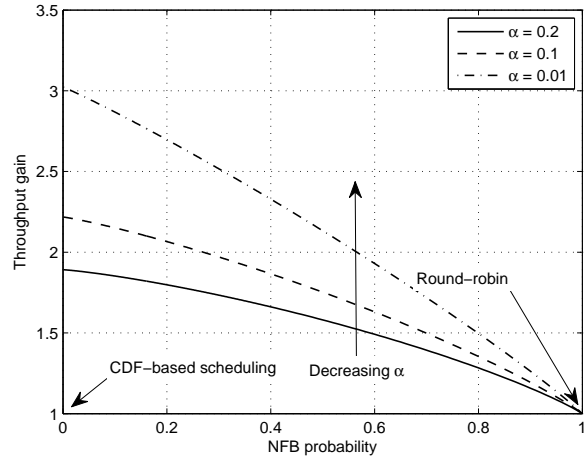}
    \caption{Throughput gain vs. NFB probability.}
    \label{fig:S_WFS_FR}
\end{figure}

\begin{figure}
    \centering
    \includegraphics[width=3.5in]{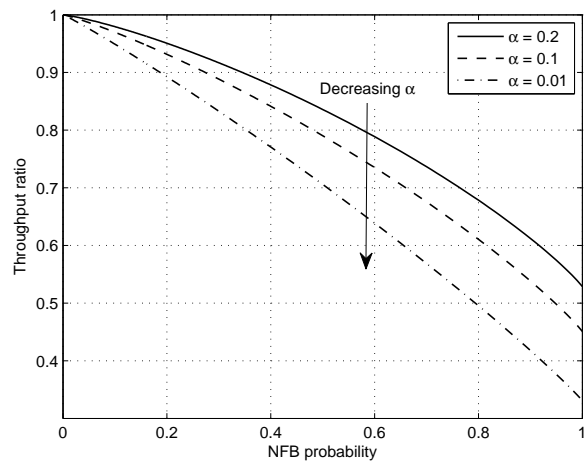}
    \caption{Ratio between the throughput values with CS-FR and CS.}
    \label{fig:S_WFS_FR_ratio}
\end{figure}

\section{Conclusions}\label{sec:Conclusions}
In this paper, we have investigated the fundamental performance limits of CS in terms of throughput, fairness, and
feedback overhead. We have rigorously characterized the throughput behavior of CS. The throughput upper-bound of
general schedulers for a given SNR distribution, data rate function, and channel access ratio has been derived and
CS has been proven to achieve the upper-bound when the data rate function has the upper limit and the CAR decreases
to zero. The lower- and upper-bound of the throughput gain with CS have also been analyzed. We have further
proposed CS-FR, a novel feedback reduction technique for CS. With CS-FR, a single threshold is
sufficient to satisfy the diverse CARs of all users, and the feedback overhead is upper-bounded by $-\ln p$ where
$p$ represents the probability that no user satisfies the threshold condition. We have also investigated the
throughput characteristics and observed that the throughput loss due to feedback reduction relative to the
throughput with full feedback is upper-bounded by $-\ln p.$ Finally, we have proposed the concept of qualitative
fairness in order to more thoroughly investigate fairness among various schedulers, and shown that CS achieves
relatively better qualitative fairness, compared with the other existing scheduling algorithms.

\appendices
\section{Properties of $S(x, \alpha)$}{\label{apx:Property}}
\subsection{Proof for Property \ref{pr:alpha_inc}:}
Let $0< \alpha_1 < \alpha_2\leq 1$, then we have
\begin{equation}
\begin{array}{ll}
\displaystyle \alpha_1 S(x, \alpha_1)
            &=    \int_{F^{-1}(x)}^\infty R(\gamma)[F(\gamma)]^{\frac{1}{\alpha_1}-1}d[F(\gamma)]\\
            &=    \alpha_2 \int_{F^{-1}(x)}^\infty R(\gamma)\frac{1}{\alpha_2}[F(\gamma)]^{\frac{1}{\alpha_1}-1}d[F(\gamma)]\\
            &\leq \alpha_2\int_{F^{-1}(x)}^\infty R(\gamma)
            \frac{1}{\alpha_2}[F(\gamma)]^{\frac{1}{\alpha_2}-1}d[F(\gamma)]\\
            & =    \alpha_2 S(x, \alpha_2).
\end{array}
\end{equation}
The inequality comes from the fact that $ F( \gamma)^{\frac{1}{\alpha_1}} \leq F(
\gamma)^{\frac{1}{\alpha_2}}$ for $0<F(\gamma)<1$.

\subsection{Proof for Property \ref{pr:alpha_dec}:}
Let $0< \alpha_1 < \alpha_2\leq 1$. Then, we have
\begin{equation}
\begin{array}{ll}
 S(x^{\alpha_1}, \alpha_1) =\int_{F^{-1}(x^{\alpha_1})}^{\infty} R(\gamma)d[F(\gamma)]^{\frac{1}{\alpha_1}} \\
 =\int_{x}^{1} R(F^{-1}(u^{\alpha_1}))du ~~~~~~~~~ ({\rm Replacing~}u = [F(\gamma)]^{\frac{1}{\alpha_1}}.)\\
 \geq \int_{x}^{1} R(F^{-1}(u^{\alpha_2}))du \\
 =\int_{x^{\alpha_2}}^{1} R(F^{-1}(v))dv^{\frac{1}{\alpha_2}} ~~~~~~~ ({\rm Replacing~} v =u^{\alpha_2}.)\\
 =\int_{F^{-1}(x^{\alpha_2})}^{\infty} R(\gamma)d[F(\gamma)]^{\frac{1}{\alpha_2}}~~({\rm Replacing~} F(\gamma) = v.)\\
= S(x^{\alpha_2}, \alpha_2),
\end{array}
\end{equation}
where we have applied the increasing property of $R(F^{-1}(y))$ with $u^{\alpha_1}
>u^{\alpha_2}.$

\subsection{Proof for Property \ref{pr:p_dec}:}
For $0 \leq x_1 \leq x_2 \leq 1$, we have
\begin{equation}
\begin{array}{ll}
 S(x_1^{\alpha}, \alpha)&=\int_{F^{-1}(x_1^{\alpha})}^{\infty}
 R(\gamma)d[F(\gamma)]^{\frac{1}{\alpha}}\\
&\geq \int_{F^{-1}(x_2^{\alpha})}^{\infty} R(\gamma)d[F(\gamma)]^{\frac{1}{\alpha}} =
S(x_2^{\alpha}, \alpha),
\end{array}
\end{equation}
where we have used the fact that $F^{-1}(x_1^{\alpha}) \leq F^{-1}(x_2^{\alpha})$.

\subsection{Proof for Property \ref{pr:p_inc}:} For $0 \leq x_1 \leq x_2 \leq 1$, we have
\begin{equation}
\begin{array}{ll}
S(x_1, \alpha)\\
 =    \int_{F^{-1}(x_2)}^\infty R(\gamma)d[F(\gamma)]^\frac{1}{\alpha} + \int_{F^{-1}(x_1)}^{F^{-1}(x_2)} R(\gamma)d[F(\gamma)]^\frac{1}{\alpha} \\
 \leq \int_{F^{-1}(x_2)}^\infty R(\gamma)d[F(\gamma)]^\frac{1}{\alpha} \!+\! \int_{F^{-1}(x_1)}^{F^{-1}(x_2)} R(F^{-1}(x_2))d[F(\gamma)]^\frac{1}{\alpha} \\
 =    \int_{F^{-1}(x_2)}^\infty R(\gamma)d[F(\gamma)]^\frac{1}{\alpha} + (x_2^\frac{1}{\alpha} - x_1^\frac{1}{\alpha}) R(F^{-1}(x_2)) \\
 =    \int_{F^{-1}(x_2)}^\infty R(\gamma)d[F(\gamma)]^\frac{1}{\alpha} \\
 ~~~~ ~~~~+ \frac{x_2^\frac{1}{\alpha} - x_1^\frac{1}{\alpha}}{1 - x_2^\frac{1}{\alpha}} \int_{F^{-1}(x_2)}^\infty R(F^{-1}(x_2))d[F(\gamma)]^\frac{1}{\alpha}  \\
 \leq \int_{F^{-1}(x_2)}^\infty R(\gamma)d[F(\gamma)]^\frac{1}{\alpha}  \\
 ~~~~ ~~~~+ \frac{x_2^\frac{1}{\alpha} - x_1^\frac{1}{\alpha}}{1 - x_2^\frac{1}{\alpha}} \int_{F^{-1}(x_2)}^\infty R(\gamma)d[F(\gamma)]^\frac{1}{\alpha}  \\
 =    \frac{1 - x_1^\frac{1}{\alpha}}{1 - x_2^\frac{1}{\alpha}} \int_{F^{-1}(x_2)}^\infty
R(\gamma)d[F(\gamma)]^\frac{1}{\alpha} \\
= \frac{1 - x_1^\frac{1}{\alpha}}{1 - x_2^\frac{1}{\alpha}} S(x_2, \alpha).
\end{array}
\end{equation}
The second inequality comes from the increasing property of $R(\gamma)$.

\subsection{Proof for Property \ref{pr:SL_alpha_inc}}
Let $0< \alpha_1 < \alpha_2\leq 1$, then we have
\begin{equation}
\begin{array}{ll}
 \alpha_1 S_L(x, \alpha_1)
            &=    \alpha_2 \int_0^{F^{-1}(x)} R(\gamma)\frac{1}{\alpha_2}[F(\gamma)]^{\frac{1}{\alpha_1}-1}d[F(\gamma)]\\
            &\leq \alpha_2\int_0^{F^{-1}(x)} R(\gamma) \frac{1}{\alpha_2}[F(\gamma)]^{\frac{1}{\alpha_2}-1}d[F(\gamma)] \\
            &=    \alpha_2 S_L(x, \alpha_2),
\end{array}
\end{equation}

\subsection{Proof for Property \ref{pr:FR_p_dec}}
For the decreasing property, we show that the derivative of the function is smaller than or
equal to 0.
\begin{equation}
\begin{array}{l}
\frac{d}{dx}\{S(x^{\alpha}, \alpha) + x^{1-\alpha}S_L(x^{\alpha},
1)\} \\
= \frac{d}{dx} \{\int_{ x^{\alpha}}^1 R(F^{-1}(u)) d u^{\frac{1}{\alpha}} +
{x^{1-\alpha}} \int_0^{ x^{\alpha}} R(F^{-1}(u)) d u \} \\
=       - (1-\alpha)R(F^{-1}(x^{\alpha})) \!   +\! (1\!-\!\alpha){ x^{-\alpha}} \int_0^{ x^{\alpha}} R(F^{-1}(u)) d u\\
\leq    - (1-\alpha)R(F^{-1}(x^{\alpha})) \!   +\! (1\!-\!\alpha){ x^{-\alpha}} \int_0^{
x^{\alpha}}
R(F^{-1}( x^{\alpha})) d u\\
=0,
\end{array}
\end{equation}
where the inequality comes from the increasing property of the functions $R(\gamma)$ and
$F(\gamma)$.

\section{Proof of Lemma \ref{tm:UB}}\label{apx:lm_UB}

Let $g(\gamma)$~($0\leq g(\gamma)\leq 1$) be the selection probability where $\gamma$
indicates the SNR of the user. The CAR is equal to $\alpha$ and we have
\begin{equation}
\int_0^\infty g(\gamma) dF (\gamma) = \alpha.
\end{equation}
Then, the achievable throughput of the user with the CAR is expressed as:
\begin{equation}
\begin{array}{ll}
\int_0^\infty R(\gamma)g(\gamma) dF(\gamma) \\
=    \int_{F^{-1}(1-\alpha)}^\infty R(\gamma)g(\gamma)dF(\gamma) + \int_0^{F^{-1}(1-\alpha)} R(\gamma)g(\gamma)dF(\gamma)\\
\leq \int_{F^{-1}(1-\alpha)}^\infty R(\gamma)g(\gamma)dF(\gamma) \\+ R({F^{-1}(1-\alpha)}) \int_0^{F^{-1}(1-\alpha)} g(\gamma)dF(\gamma)\\
=    \int_{F^{-1}(1-\alpha)}^\infty R(\gamma)g(\gamma)dF(\gamma) \\+ R({F^{-1}(1-\alpha)}) [\int_0^\infty g(\gamma)dF(\gamma) - \int_{F^{-1}(1-\alpha)}^{\infty} g(\gamma)dF(\gamma)]\\
=    \int_{F^{-1}(1-\alpha)}^\infty R(\gamma)g(\gamma)dF(\gamma) \\+ R({F^{-1}(1-\alpha)}) [\alpha- \int_{F^{-1}(1-\alpha)}^{\infty} g(\gamma)dF(\gamma)]\\
=    \int_{F^{-1}(1-\alpha)}^\infty R(\gamma)g(\gamma)dF(\gamma) \\+ R({F^{-1}(1-\alpha)}) [\int_{F^{-1}(1-\alpha)}^{\infty} dF(\gamma)\!-\! \int_{F^{-1}(1-\alpha)}^{\infty} \!g(\gamma)dF(\gamma)]\\
=    \int_{F^{-1}(1-\alpha)}^\infty R(\gamma)g(\gamma)dF(\gamma) \\+ R({F^{-1}(1-\alpha)}) \int_{F^{-1}(1-\alpha)}^{\infty} [1-g(\gamma)]dF(\gamma)\\
\leq \int_{F^{-1}(1-\alpha)}^\infty R(\gamma)g(\gamma)dF(\gamma) \\+  \int_{F^{-1}(1-\alpha)}^{\infty} R(\gamma)[1-g(\gamma)]dF(\gamma) \\
=    \int_{F^{-1}(1-\alpha)}^{\infty} R(\gamma)dF(\gamma) \\
= S(1 - \alpha, 1),
\end{array}
\end{equation}
where the two inequalities are come from the increasing property of $R(\gamma)$. If we replace $g(\gamma)$ with
$F(\gamma)^{\frac{1}{\alpha} - 1}$, we can observe that this upper-bound is always larger than the throughput of
CS.

\section{Proof of Theorem \ref{tm:UB_achivability}}\label{apx:lm_UB_achivability}

First, from \textit{Lemma \ref{tm:UB}}, we have
\begin{equation}\label{eq:lm_UB_1}
\displaystyle \lim_{\alpha \rightarrow 0} \frac{\alpha S(0, \alpha)}{S(1 -\alpha, 1)}\leq
1.
\end{equation}
On the other hand, we have {
\begin{equation}\label{eq:lm_UB_2}
\begin{array}{ll}
\lim\limits_{\alpha \rightarrow 0} \frac{\alpha S(0, \alpha)}{S(1 -\alpha, 1)} & = \lim\limits_{\alpha \rightarrow
0}  \frac{ \alpha \int_{0}^\infty R(\gamma)d[F(\gamma)]^{\frac{1}{\alpha}}}{\int_{F^{-1}(1-\alpha)}^\infty R(\gamma)dF(\gamma)} \\
& \geq  \lim\limits_{\alpha \rightarrow 0} \frac{\alpha\int_{\gamma_{0}}^\infty R(\gamma)d[F(\gamma)]^{\frac{1}{\alpha}}}{\int_{F^{-1}(1-\alpha)}^\infty R(\gamma)dF(\gamma)} \\
&=  \lim\limits_{\alpha \rightarrow 0}\frac{\alpha R_{\rm th}\{1 -
[F(\gamma_{0})]^{\frac{1}{\alpha}}\}}{ \alpha R_{\rm th}}\\
&= {\lim\limits_{\alpha \rightarrow 0}1 - [F(\gamma_{0})]^{\frac{1}{\alpha}}} \\
&= 1.
\end{array}
\end{equation}}
Comparing (\ref{eq:lm_UB_1}) and (\ref{eq:lm_UB_2}), we can conclude the statement.

\section{Proof of Theorem \ref{tm:low_up}}\label{apx:lm_low_up}
The upper bound is given by the \textit{Lemma \ref{tm:UB}} . For the lower bound, we have
\begin{equation}
\begin{array}{ll}
S_{\rm CS}(\alpha) &= \alpha \int_{0}^\infty R(\gamma)d[F(\gamma)]^{\frac{1}{\alpha}}   \\
                    &\geq \alpha \int_{F^{-1}(1 +\alpha\ln \alpha)}^\infty R(\gamma)d[F(\gamma)]^{\frac{1}{\alpha}} \\
                    &\geq \alpha R(F^{-1}(1  + \alpha\ln\alpha)) \int_{F^{-1}(1  + \alpha\ln \alpha)}^\infty \!\!d[F(\gamma)]^{\frac{1}{\alpha}} \\
                    &=    \alpha R(F^{-1}(1  + \alpha\ln\alpha)) [ 1 - (1  + \alpha\ln \alpha)^{\frac{1}{\alpha}} ].
\end{array}
\end{equation}
The throughput gains of the upper and lower bound throughput are calculated as
\begin{equation}\label{eq:UB_scale1}
\begin{array}{ll}
g_{\rm UB} &=\lim\limits_{\alpha \rightarrow 0}\frac{S_{\rm UB}(\alpha)}{S_{\rm RRS}(\alpha)}\\
           & = \lim\limits_{\alpha \rightarrow 0}\frac{\int_{1-\alpha}^1 R(F^{-1}(u))du}{\alpha\int_{0}^\infty
           R(\gamma)dF(\gamma)}~~~~~~~~~~~~~~~~~~\\
           & = \lim\limits_{\alpha \rightarrow 0}\frac{ R(F^{-1}(1-\alpha))}{E[R]},
\end{array}
\end{equation}
\begin{equation}\label{eq:UB_scale2}
\begin{array}{ll}
g_{\rm LB} &= \lim\limits_{\alpha \rightarrow 0}\frac{S_{\rm LB}(\alpha)}{S_{\rm RRS}(\alpha)}\\
           & = \lim\limits_{\alpha \rightarrow 0}\frac{\alpha R(F^{-1}(1  + \alpha\ln\alpha)) [ 1 - (1 + \alpha\ln \alpha)^{\frac{1}{\alpha}} ]}{\alpha\int_{0}^\infty
           R(\gamma)dF(\gamma)}\\
           & = \lim\limits_{\alpha \rightarrow 0}\frac{ R(F^{-1}(1+\alpha\ln\alpha))}{E[R]}.
\end{array}
\end{equation}
Here, we have used the property that $\lim\limits_{\alpha \rightarrow 0} (1 +\alpha \ln
\alpha)^{\frac{1}{\alpha}} = 0$.

\section{SNR distribution of the selected user with CS-FR }\label{apx:F_i_sel}
Given that the $i$-th user is selected, its SNR distribution in the NFB slots is derived as
\begin{equation}
\begin{array}{ll}
&F_{i, \rm Sel, NFB}(\gamma) \\& = \Pr\{ \gamma_i < \gamma|\text{user $i$ is selected, the slot is a  NFB slot}\}\\
                        & =  \frac{\Pr\{ \gamma_i < \gamma, \text{user $i$ is selected, the slot is a NFB slot}\}}{\Pr\{\text{user $i$ is selected, the slot is a NFB slot}\}}\\
                        & = \frac{\alpha_i \Pr\left\{ \gamma_i < \gamma, ~U_j^{\frac{1}{w_j}}<\eta_{\rm th},~\forall j \in \{1, 2, \cdots, n\}\right\}}{\alpha_ip}\\
                        & = \frac{\alpha_i \frac{p}{p^{\alpha_i}}\Pr\left\{ \gamma_i < \gamma, ~\gamma_i<F_{i}^{-1}(p^{\alpha_i})\right\}}{\alpha_ip}\\
                        & = \left\{
                        \begin{array}{ll}
                        p^{-\alpha_i} F_i(\gamma),&~~~~{\rm if}~0<\gamma<F_{i}^{-1}(p^{\alpha_i}),\\
                        1,                                 &~~~~ {\rm if~}\gamma \geq F_i^{-1}(p^{\alpha_i}),
                        \end{array}

                        \right.
\end{array}
\end{equation}
where we have used the fact $\eta_{\rm th}^{w_i}={p^{\alpha_i}} $ from (\ref{eq:P_out}). The
SNR distribution in the FB slots is derived as
\begin{equation}
\begin{array}{ll}
&F_{i, \rm Sel, FB}(\gamma) \\& = \Pr\{ \gamma_i < \gamma|\text{user $i$ is selected, the slot is a FB slot}\}\\
                        & = \frac{\Pr\{ \gamma_i < \gamma, \text{user $i$ is selected, the slot is a FB slot}\}}{\Pr\{\text{user $i$ is selected, the slot is a FB slot}\}}\\
                        & = \frac{\Pr\{ \gamma_i < \gamma, ~U_i^{\frac{1}{w_i}} > \eta_{\rm th},~ U_i^{\frac{1}{w_i}} > U_j^{\frac{1}{w_j}}, \forall j \in \{1, 2, \cdots, i-1, i+1, \cdots n\} \}}{\alpha_i(1-p)}\\
                        & = \frac{\Pr\{ \gamma_i < \gamma, ~\gamma_i > F_i^{-1}({p^{\alpha_i}}),~ U_j < [F_i(\gamma_i)]^{\frac{w_j}{w_i}} , \forall j \in \{1, 2, \cdots, n\} \& j \neq i\}}{\alpha_i(1-p)}\\
                        & = \frac{\int_{F_i^{-1}({p^{\alpha_i}})}^\gamma [F_i(\gamma_i)]^{\sum_{j=1,j\neq i}^n\frac{w_j}{w_i}} d F_i(\gamma_i)}{\alpha_i(1-p)}\\
                        & = \left\{
                        \begin{array}{ll}
                        0,&~~~~ {\rm if~}0<\gamma < F_i^{-1}({p^{\alpha_i}}),\\
                        \frac{[F_i(\gamma)]^{\frac{1}{\alpha_i}} - p}{1-p}, &~~~~ {\rm if~}\gamma \geq F_i^{-1}(p^{\alpha_i}).
                        \end{array}
                        \right.
\end{array}
\end{equation}
Finally, the SNR distribution given that the $i$-th user is selected is derived as
\begin{equation}
\begin{array}{l}
F_{i, \rm Sel}(\gamma)   = F_{i, \rm Sel, NFB}(\gamma) \Pr\{\text{NFB slot}\} \\
        ~~~~~~~~~~~~~~~~~~+ F_{i, \rm Sel, FB}(\gamma) \Pr\{\text{FB slot}\}\\
                    ~~ = F_{i, \rm Sel, NFB}(\gamma) p+ F_{i, \rm Sel, FB}(\gamma) (1-p)\\
                    ~~= \left\{
                        \begin{array}{ll}
                        p^{1-\alpha_i} F_i(\gamma),&~~~~ {\rm if~}0<\gamma < F_i^{-1}({p^{\alpha_i}}),\\
                        F_i(\gamma)^{\frac{1}{\alpha_i}}, &~~~~ {\rm if~}\gamma \geq F_i^{-1}(p^{\alpha_i}).
                        \end{array}
                        \right.
\end{array}
\end{equation}

\section{Proof for Theorem~\ref{lm:FR_Low}} \label{apx:lm_FR_Low}
\begin{equation}
\begin{array}{l}
\frac{1}{\alpha} S_{\rm CS-FR}(\alpha, p)
            = S(p^{\alpha}, \alpha) + p^{1-\alpha}S_L(p^{\alpha}, 1)\\
~~~~~~        \geq  S(p^{\alpha}, \alpha) + p^{1-\alpha}\alpha S_L(p^{\alpha}, \alpha)\\
~~~~~~        =  (1-\alpha p^{1-\alpha}) S(p^{\alpha}, \alpha) + \alpha p^{1-\alpha }  S(0, \alpha) \\
~~~~~~        \geq (1-\alpha p^{1-\alpha }) (1- p)S(0, \alpha) + \alpha p^{1-\alpha }  S(0, \alpha)\\
~~~~~~        = (1-p + \alpha p^{2-\alpha }) S(0, \alpha)\\
~~~~~~       \geq (1 -  p )S(0, \alpha)\\
~~~~~~       = (1 -p )\frac{1}{\alpha}S_{\rm CS}(\alpha).
\end{array}
\end{equation}
where {\em Property \ref{pr:SL_alpha_inc}} and {\em Property~\ref{pr:p_inc}} have been
applied to obtain the first and second inequalities, respectively.

\bibliographystyle{./References/IEEEtran}
\bibliography{./References/IEEEabrv,./References/RefFull,./References/References}

\end{document}